\spnewtheorem*{sketchofproof}{Sketch of Proof}{\itshape}{\rmfamily}
\newcommand{\rephrase}[3]{\medskip\noindent\textbf{#1~#2.}~\emph{#3}}
\newcommand{\gforall}{\forall\:}
\newcommand{\pathto}{\rightsquigarrow}
\newcommand{\npathto}{\centernot\rightsquigarrow}
\begin{document}

\opt{arxiv}{\pagestyle{plain}}

\title{Bitonic \textit{st}-orderings for Upward Planar Graphs}

\author{Martin Gronemann}
%
\authorrunning{M. Gronemann}

\institute{University of Cologne, Germany\\
\email{gronemann@informatik.uni-koeln.de}} 

\maketitle

\begin{abstract}
Canonical orderings serve as the basis for many incremental planar drawing algorithms.
All these techniques, however, have in common that they are limited to undirected graphs.
While $st$-orderings do extend to directed graphs, especially planar $st$-graphs, they do not
offer the same properties as canonical orderings. In this work we extend the so called 
bitonic $st$-orderings to directed graphs. We fully characterize planar $st$-graphs that 
admit such an ordering and provide a linear-time algorithm for recognition and ordering.
If for a graph no bitonic $st$-ordering exists, we show how to find in linear time a 
minimum set of edges to split such that the resulting graph admits one.
With this new technique we are able to draw every upward planar graph on $n$ vertices by
using at most one bend per edge, at most $n - 3$ bends in total and within quadratic area.
\end{abstract}

\section{Introduction}
Drawing directed graphs is a fundamental problem in graph drawing and 
has therefore received a considerable amount of attention in the past.
Especially the so called \emph{upward planar drawings}, a planar drawing in which the 
curve representing an edge has to be strictly $y$-monotone 
from its source to target. The directed graphs that admit such a drawing
are called the \emph{upward planar} graphs. Deciding if a directed graph is upward planar
turned out to be {NP}-complete in the general case~\cite{Garg:1995fk}, but there exist special cases
for which the problem is polynomial-time solvable~\cite{Abbasi2010274,Bertolazzi:1994uq,DidimoGL09,hl-uptssad-96,p-uptod-95,DBLP:conf/walcom/SameeR07}. An important result in our context is from Di Battista and Tamassia~\cite{DiBattista1988175}. They show that every upward planar
graph is the spanning subgraph of a planar $st$-graph, that is, a planar directed acyclic graph with a single source
and a single sink. They also show that every such graph has an 
upward planar straight-line drawing~\cite{DiBattista1988175}, but it may require exponential area
which for some graphs cannot be avoided~\cite{2014arXiv1410.1006D,Battista:1992fk}.

If one allows bends on the edges, then every upward planar graph can be drawn
within quadratic area.
Di~Battista and Tamassia~\cite{DiBattista1988175} describe an approach that is based 
on the visibility representation of a planar $st$-graph. Every
edge has at most two bends, therefore, the resulting drawing has at most $6n-12$ bends with $n$ being the number of vertices.
With a more careful choice of the vertex positions and by employing a special visibility representation,
the authors manage to improve this bound to $(10n - 31)/3$. Moreover, the drawing requires only quadratic area and can be obtained in linear time.
Another approach by Di~Battista~et~al.~\cite{Battista:1992fk} uses an algorithm 
that creates a straight-line dominance drawing as an intermediate step. 
A dominance drawing, however, has much stronger requirements than an upward planar drawing. 
Therefore, the presented algorithm in~\cite{Battista:1992fk} cannot handle planar $st$-graphs directly.
Instead it requires a \emph{reduced planar $st$-graph}, that is, a planar $st$-graph without \emph{transitive edges}. 
In order to obtain such a graph, Di~Battista~et~al.~\cite{Battista:1992fk} split every transitive edge
by replacing it with a path of length two. The result is a reduced planar $st$-graph 
for which a straight-line dominance drawing is obtained that requires only quadratic area 
and can be computed in linear time. Then they reverse the procedure of splitting the edges 
by using the coordinates of the inserted dummy vertices as bend points. 
Since a planar $st$-graph has at most $2n-5$ transitive edges, 
the resulting layout has not more than $2n-5$ bends and at most
one bend per edge. To our knowledge, this bound is the best achieved so far.

These techniques are very different to the ones used in the undirected case. 
One major reason is the availability of \emph{canonical orderings} for undirected graphs,
introduced by de Fraysseix~et~al.~\cite{fpp-hdpgg-90} to draw every (maximal) 
planar graph straight-line within quadratic area. From there on this concept has been further 
improved and generalized~\cite{harel98algorithm,k-dpguc-96,KANT1997175}. Biedl and Derka~\cite{BiedlD15a} discuss various variants
and their relation. Another similar concept that extends to non-planar graphs is the Mondshein sequence~\cite{Schmidt2014}.
However, all these orderings have in common that they do not extend to directed graphs,
that is, for every edge $(u,v)$, it holds that $u$ precedes $v$ in the ordering. An exception
are $st$-orderings. While they are easy to compute for planar $st$-graphs, they lack 
a certain property compared to canonical orderings. In~\cite{gronemann14} we introduced
for undirected biconnected planar graphs the \emph{bitonic $st$-ordering}, a special $st$-ordering which has 
properties similar to canonical orderings. However, the algorithm in~\cite{gronemann14}
uses canonical orderings for the triconnected case as a subroutine. Since finding a canonical ordering
is in general not a trivial task, respecting the orientation of edges makes it even harder.
Nevertheless, such an ordering is desirable, since one would be able to use incremental drawing approaches
for directed graphs that are usually limited to the undirected case.

In this paper we extend the bitonic $st$-ordering to directed graphs, namely planar $st$-graphs. 
We start by discussing the consequences of having such an ordering available.
Based on the observation that the algorithm of de~Fraysseix et al.~\cite{fpp-hdpgg-90} can easily be modified to obtain an
upward planar straight-line drawing, we show that for good reasons 
not every planar $st$-graph admits such an ordering.
After deriving a full characterization of the planar $st$-graphs that do admit a bitonic $st$-ordering,
we provide a linear-time algorithm that recognizes these and computes a corresponding ordering.
For a planar $st$-graph that does not admit a bitonic $st$-ordering, we show that splitting at most $n-3$
edges is sufficient to transform it into one for which then an ordering can be found.
Furthermore, a linear-time algorithm is described that determines the smallest set of edges to split.
By combining these results, we are able to draw every planar $st$-graph with at most one bend per edge, $n-3$ bends in total
within quadratic area in linear time. This improves the upper bound on the total number of bends considerably.
\opt{lncs}{Some proofs have been omitted and can be found in the full version~\cite{bitonic-upward-arxiv} or in~\cite{Gronemann15}.}
\opt{arxiv}{Some proofs have been omitted and can be found in Appendix~\ref{app:proofs} or in~\cite{Gronemann15}.}

\section{Preliminaries}
In this work we are solely concerned with a special type of directed graph, the so-called \emph{planar st-graph}, 
that is, a planar acyclic directed graph $G= (V,E)$ with a single source $s \in V$, a single sink $t \in V$ and no parallel edges.
It should be noted that some definitions assume that $(s,t) \in E$, 
we explicitly do not require this edge to be present. However, we assume a fixed embedding scenario
such that $s$ and $t$ are on the outer face. Under such constraints, planar $st$-graphs possess 
the property of being \emph{bimodal}, that is, the incoming and outgoing edges
appear as a consecutive sequence around a vertex in the embedding.
Given an edge $(u,v) \in E$, we refer to $v$ as a \emph{successor} of $u$ and call $u$ a \emph{predecessor} of $v$.
Similar to~\cite{gronemann14}, we define for every vertex $u \in V$ a list of successors $S(u) = \{ v_1, \ldots, v_m\}$,
ordered by the outgoing edges $(u,v_1), \ldots, (u,v_m)$ of $u$ as they appear in the embedding clockwise around $u$.
For $S(s)$ we choose $v_1$ and $v_m$ such that $v_m, s, v_1$ appear clockwise on the outer face.
A central problem will be the existence of paths between vertices.
Therefore, we refer to a path from $u$ to $v$ and its existence with $u \pathto v \in G$.
With a few exceptions, $G$ is clear from the context, thus, we omit it. If there exists no path
$u \pathto v$, we may abbreviate it by writing $u \npathto v$.

Let $G = (V,E)$ be a planar $st$-graph and $\pi : V \mapsto \{1, \ldots, |V| \}$ 
be the rank of the vertices in an ordering $s = v_1, \ldots, v_n = t$. 
$\pi$ is said to be an \emph{$st$-ordering}, if for all edges $(u,v) \in E$, $\pi(u) < \pi(v)$ holds.
In case of a (planar) $st$-graph such an ordering can be obtained in linear time 
by using a simple topological sorting algorithm~\cite{Cormen:2009}. We are interested in a special type
of $st$-ordering, the so called bitonic $st$-ordering introduced in~\cite{gronemann14}. We say an ordered sequence 
$A = \{ a_1, \ldots, a_n \}$ is \emph{bitonic increasing},
if there exists $1 \leq h \leq n$ such that $a_1 \leq \cdots \leq a_h \geq \cdots \geq a_n$
and \emph{bitonic decreasing}, if $a_1 \geq \cdots \geq a_h \leq \cdots \leq a_n$.
Moreover, we say $A$ is bitonic increasing (decreasing) with respect to a function 
$f$, if $A' = \{ f(a_1), \ldots, f(a_n) \}$ is bitonic increasing (decreasing).
In the following, we restrict ourselves to bitonic increasing sequences and abbreviate it by just referring to it as being bitonic.
An $st$-ordering $\pi$ for $G$ is a \emph{bitonic $st$-ordering} for $G$, 
if at every vertex $u \in V$ the ordered sequence of successors $S(u) = \{ v_1, \ldots, v_m \}$ 
as implied by the embedding is bitonic with respect to $\pi$, that is, there exists 
$1 \leq h \leq m$ with $\pi(v_1) < \cdots < \pi(v_h) > \cdots > \pi(v_m)$. 
Notice that the successors of a vertex are distinct and so are their labels in an $st$-ordering.

\section{Upward planar straight-line drawings \& bitonic \textit{st}-orderings}\label{sec:bitonic_straight_line}
We start by assuming that we are given a planar $st$-graph $G = (V,E)$ together with a bitonic $st$-ordering $\pi$.
The idea is to use the straight-line algorithm from~\cite{gronemann14} which is based on the one 
in~\cite{harel98algorithm} to produce an upward planar straight-line layout.
Due to space constraints, we omit details here and only sketch the two modifications that are
necessary. For a full pseudocode listing, an example and a detailed description, 
\opt{lncs}{see the full version~\cite{bitonic-upward-arxiv} or~\cite{Gronemann15}.}%
\opt{arxiv}{see Appendix~\ref{app:desc} or~\cite{Gronemann15}.}
When using a bitonic $st$-ordering to drive the planar straight-line algorithm of de~Fraysseix~et~al.~\cite{fpp-hdpgg-90},
the only critical case is the one in which a vertex $v_k$ must be placed that has only one neighbor, say $w_i$, in the subgraph drawn so far.
In~\cite{gronemann14} we use the idea of Harel and Sardas~\cite{harel98algorithm} who guarantee with their ordering that 
the edges preceding or following $(w_i, v_k)$ in the embedding around $w_i$ have already been drawn. 
Hence one may just pretend that $v_k$ has a second neighbor either to the right or left of $w_i$. 
The idea is illustrated in Fig.~\ref{fig:fpp_bitonic_support_1} where $v_k$ uses $w_{i+1}$, the successor of $w_i$ on the contour, as second neighbor.
The following lemma captures the required property and shows that a bitonic $st$-ordering complies with it.

\begin{figure}[t]
  \centering
    \begin{minipage}[b]{0.48\textwidth}
        \centering
        \subfloat[\label{fig:fpp_bitonic_support_1}{}]
        { \includegraphics[page=1]{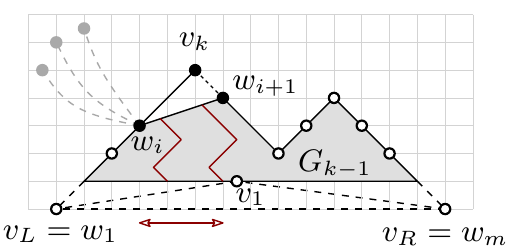}}
    \end{minipage}\hfill
    \begin{minipage}[b]{0.48\textwidth}
        \centering
        \subfloat[\label{fig:fpp_bitonic_support_2}{}]
        {\includegraphics[page=2]{fpp_property}}
    \end{minipage}
\caption{(a)~A vertex $v_k$ with only one predecessor $w_i$ using the vertex $w_{i+1}$ as second neighbor. 
		   Vertices in grey have not been drawn yet. The two dummy vertices $v_L, v_R$ remain the left- and rightmost ones.
	      (b)~Example of an upward planar straight-line drawing on seven vertices.}
	      \label{fig:fpp_bitonic_support}
\end{figure}

\newcommand{\lembitonicdrawing}{
Let $G = (V,E)$ be an embedded planar $st$-graph with a corresponding bitonic $st$-ordering $\pi$.
Moreover, let $v_k$ be the {$k$-th} vertex in $\pi$ and $G_k=(V_k,E_k)$ the subgraph induced by $v_1, \ldots, v_k$.
For every $1 < k \leq |V|$ the following holds:
\begin{enumerate}
\item $G_{k}$ and $G-G_{k}$ are connected,
\item $v_k$ is in the outer face of $G_{k-1}$,
\item For every vertex $v \in V_k$, the neighbors of $v$ that are not in $G_k$
appear consecutively in the embedding around $v$.
\end{enumerate}}

\begin{lemma}\label{lem:lembitonicdrawing}
\lembitonicdrawing
\end{lemma}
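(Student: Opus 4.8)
The plan is to prove all three properties by induction on $k$, establishing the first one separately (it relies only on $\pi$ being an $st$-ordering), and then bootstrapping the second and third from a single strengthened invariant. First I would fix the embedding of $G$ and record two consequences of $\pi$ being an $st$-ordering: every vertex other than $s=v_1$ has an incoming edge from a strictly lower rank, and every vertex other than $t=v_n$ has an outgoing edge to a strictly higher rank. These alone give the first property. $G_k$ is connected because each $v_j$ with $j>1$ attaches to the already-connected $G_{j-1}$ through an incoming edge. And $G-G_k$ is connected because from any $v_j$ with $k<j<n$ one can follow outgoing edges of strictly increasing rank until reaching $t=v_n$, so every vertex of $G-G_k$ is joined to $t$ inside $G-G_k$. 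I would prove this property first because the argument for the second one needs both that $G_{k-1}$ is connected (so its outer boundary is a single closed walk) and that $G-G_k$ is connected (to forbid trapping).

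For the third property I would first argue that, for $v\in V_k$, the neighbors of $v$ outside $G_k$ are exactly the successors of $v$ of rank greater than $k$: a predecessor $w$ of $v$ satisfies $\pi(w)<\pi(v)\le k$, hence lies in $G_k$, so every neighbor outside $G_k$ is reached by an outgoing edge. By bimodality these outgoing edges form the contiguous arc $S(v)$ in the embedding, and by the bitonic property the sequence $\pi$ restricted to $S(v)$ is bitonic. The elementary fact I would invoke is that in a bitonic sequence the indices whose value exceeds a fixed threshold $k$ form a contiguous interval: left of the peak the values increase and right of it they decrease, so any index lying between two super-threshold indices is itself super-threshold. Applied to $S(v)$ this shows the successors of rank $>k$ are consecutive within $S(v)$, and since $S(v)$ is itself a consecutive arc around $v$, these neighbors appear consecutively in the embedding. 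This is the one place where bitonicity is genuinely used.

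For the second property I would carry the stronger invariant that $G-G_k$ lies entirely inside the outer (unbounded) face of $G_k$; the second property is then the special case that $v_k\in G-G_{k-1}$ lies in the outer face of $G_{k-1}$. The base case $G_1=\{s\}$ is immediate. For the inductive step, assuming $G-G_{k-1}$ sits in the outer face $f$ of $G_{k-1}$ (whose boundary is a single closed walk since $G_{k-1}$ is connected), I would observe that passing from $G_{k-1}$ to $G_k$ only adds $v_k$, which lies in $f$, together with its incoming edges to predecessors on $\partial f$. These incoming edges subdivide $f$ into pockets, each bounded by two consecutive incoming edges and an arc of $\partial f$, plus one further region $R$. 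By bimodality all \emph{outgoing} edges of $v_k$ leave on the complementary arc and hence enter the single region $R$, so the pockets receive no edge of $v_k$ at all.

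The crux of the argument, and the step I expect to be the main obstacle, is showing that nothing of $G-G_k$ is trapped in a pocket, i.e. that all of $G-G_k$ stays in $R$ and that $R$ is the outer face. Here I would combine planarity with the first property: $G-G_k$ is connected and, being disjoint from $v_k$ and from its newly added edges, cannot cross those edges, so it is confined to a single subregion of $f$; since at least one successor of $v_k$ is a vertex of $G-G_k$ reached by an outgoing edge into $R$, that subregion is $R$. To identify $R$ as the \emph{outer} face rather than a bounded pocket, I would use that $t=v_n\in G-G_k\subseteq R$ and that $t$ is incident to the unbounded face of $G$, which is therefore contained in $R$, forcing $R$ to be unbounded. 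This closes the induction and yields the second property. The delicate points to get right are that the predecessors of $v_k$ really lie on $\partial f$ (so that the incoming edges run inside $f$), and that it is precisely the connectivity of $G-G_k$ that rules out any vertex being enclosed in a pocket.
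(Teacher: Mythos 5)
Your proof is correct and matches the paper's: for the third property you use exactly the argument the paper gives (neighbors of $v$ outside $G_k$ are necessarily successors, $S(v)$ is a contiguous arc by bimodality, and the entries of a bitonic sequence exceeding a fixed threshold form a contiguous block), merely stated directly rather than in the contrapositive. For the first two properties the paper simply asserts that they hold for every $st$-ordering with $s$ and $t$ on the outer face, whereas you supply the standard inductive argument in full; the two points you flag as delicate (that the predecessors of $v_k$ lie on the boundary of the outer face of $G_{k-1}$, and that connectivity of $G-G_k$ prevents anything from being trapped in a pocket) are exactly the right ones and your handling of them is sound.
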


\begin{sketchofproof}
The first two properties hold for all $st$-orderings. For the third, assume to the contrary, contradicting that 
$S(v)$ is bitonic with respect to $\pi$.
\qed\end{sketchofproof}

Due to the third statement we can always choose a second neighbor either to the left or right, since otherwise the grey vertices in Fig.~\ref{fig:fpp_bitonic_support_1}
would not be consecutive in the embedding around $w_i$.
The second modification solves a problem that arises in the initialization phase
of the drawing algorithm. Recall that in~\cite{fpp-hdpgg-90} the first three vertices are drawn as a triangle.
This of course works in the case of a canonical ordering, but requires extra care when using a bitonic $st$-ordering.
In order to avoid subcases and keep things simple,
we add two isolated dummy vertices $v_L$ and $v_R$ that take the roles of the first two vertices 
and pretend to form a triangle with $v_1 = s$. This has another side effect: It avoids distinguishing
between subcases when we have to find a second neighbor at the boundary of the contour, because 
$v_L$ is always the first, and $v_R$ always the last vertex on every contour during the incremental construction. See the example in Fig.~\ref{fig:fpp_bitonic_support_2}.

\begin{theorem}\label{thm:bitonic_upward_straightline}
Given an embedded planar $st$-graph $G = (V,E)$ and a corresponding bitonic $st$-ordering $\pi$ for $G$.
An upward planar straight-line drawing for $G$ of size $(2|V|-2) \times (|V|-1)$
can be obtained from $\pi$ in linear time.
\end{theorem}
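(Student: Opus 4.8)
The plan is to construct the drawing with the modified shift method of de~Fraysseix~et~al.~\cite{fpp-hdpgg-90}, as used in~\cite{gronemann14,harel98algorithm}, inserting the vertices in the order $v_1, \ldots, v_{|V|}$ prescribed by $\pi$. I would begin from the degenerate triangle formed by the two isolated dummies $v_L$, $v_R$ and $v_1 = s$, and then add $v_2, \ldots, v_{|V|}$ one at a time: for each $v_k$ I locate its already-drawn neighbors, which form a contiguous interval $w_p, \ldots, w_q$ of the current contour, shift the appropriate contour suffix to open up room, and place $v_k$ on the grid at the apex $\mu$ obtained by intersecting the slope-$(+1)$ line through $w_p$ with the slope-$(-1)$ line through $w_q$. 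Lemma~\ref{lem:lembitonicdrawing} provides exactly the structural guarantees that make every step well defined: property~(2) puts $v_k$ into the outer face of $G_{k-1}$ so that its drawn neighbors lie on the contour, property~(3) ensures that in the critical single-neighbor case the pretend second neighbor can always be taken as the contour-successor $w_{i+1}$ (or contour-predecessor $w_{i-1}$) of the unique real neighbor $w_i$, and property~(1) keeps the partial drawing connected throughout.

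Planarity and the straight-line property would then be inherited verbatim from the analysis of the shift method: maintaining the invariant that consecutive contour edges have slope $\pm 1$ and shifting before each insertion keeps the drawing crossing-free, and neither modification disturbs this --- the dummies only occupy the extreme left and right of every contour, and the pretend neighbor in the single-neighbor case is an honest contour vertex, so the geometric placement of $v_k$ is the ordinary one.

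The genuinely new ingredient is to certify that the result is \emph{upward}, i.e.\ that each edge $(u,v)$ is drawn as a strictly $y$-monotone segment directed from $u$ to $v$; equivalently $y(u) < y(v)$. The key observation is that, because $\pi$ is an $st$-ordering, every neighbor of $v_k$ lying in $G_{k-1}$ is a \emph{predecessor} of $v_k$, while all of its successors receive larger rank and are inserted later. It therefore suffices to show that each $v_k$ is placed strictly above all of its contour neighbors $w_p, \ldots, w_q$. This is where the shift is essential: since the neighbors are joined by a $\pm 1$ contour chain of horizontal extent $x(w_q)-x(w_p)$, the height of the apex exceeds the maximum possible height of that chain precisely because the shift enlarges this extent, so that $y(v_k) = y(\mu) > y(w_j)$ for every neighbor $w_j$ --- including the single real neighbor in the critical case. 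A straightforward induction on $k$ then yields $y(u) < y(v)$ for every edge, which is the upward condition.

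Finally, the grid size and running time follow from the established bounds. The width and height come from the standard analysis of the shift method applied to the $|V|-1$ real insertions on top of the initial dummy triangle; because $v_L$ and $v_R$ are isolated and are discarded at the end, the bounding box tightens to $(2|V|-2)\times(|V|-1)$. Linear time follows from the amortized-constant-time realization of the shift method, in which contour neighbors, accumulated offsets, and the choice of pretend neighbor are all maintained in constant time per vertex. The main obstacle I anticipate is not any isolated step but the careful reconciliation of the single-neighbor modification with both requirements at once: the pretend neighbor must be selectable (guaranteed by Lemma~\ref{lem:lembitonicdrawing}, property~(3)) and must leave the placement both crossing-free and strictly above the real neighbor, so that planarity and upwardness are preserved simultaneously.
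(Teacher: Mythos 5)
Your proposal is correct and follows essentially the same route as the paper: the modified shift method seeded with the dummy triangle $v_L, v_R, v_1$, with Lemma~\ref{lem:lembitonicdrawing} justifying the single-neighbor case, upwardness derived from the fact that each $v_k$ is placed strictly above its already-drawn neighbors (which, since $\pi$ is an $st$-ordering, are exactly its predecessors), and the area bound obtained from the standard analysis on $|V|+2$ vertices followed by removal of the two dummies. Your write-up simply makes explicit several steps the paper leaves implicit.
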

\begin{proof}
The upward property is obtained by the following observation: The original planar straight-line algorithm
installs every vertex $v_k$ with $k > 2$ above its predecessors. Since we start with $v_L, v_R, v_1$,
the drawing is upward. It remains to bound the area. 
Notice that the input consists of the two additional vertices $v_L,v_R$.
The original algorithm, without any area improvements, produces a drawing with a size of 
$2((|V|+2) - 4) \times (|V|+2) - 2$ = $2|V| \times |V|$. 
However, $v_L$ and $v_R$ are dummy vertices  and can be removed anyway. 
Moreover, every other vertex is located above them. Hence,
their removal yields a smaller drawing of size $(2|V|-2) \times (|V|-1)$.
\qed\end{proof}

Now the first question that comes to mind is, if we can always find a bitonic $st$-ordering.
Although every planar $st$-graph admits an upward planar straight-line drawing~\cite{DiBattista1988175},
there exist some classes for which it is known that they require exponential area~\cite{2014arXiv1410.1006D,Battista:1992fk}. 
Since Theorem~\ref{thm:bitonic_upward_straightline} clearly states that the drawing 
requires only polynomial area, these graphs cannot admit a bitonic $st$-ordering.

\begin{corollary}
Not every planar $st$-graph admits a bitonic $st$-ordering.
\end{corollary}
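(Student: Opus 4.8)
The plan is to argue by contradiction, playing the polynomial upper bound of Theorem~\ref{thm:bitonic_upward_straightline} against the known exponential area lower bound for upward planar straight-line drawings. Suppose, for the sake of contradiction, that \emph{every} planar $st$-graph admits a bitonic $st$-ordering. Then for an arbitrary planar $st$-graph $G = (V,E)$ I could fix such an ordering $\pi$ and feed it into Theorem~\ref{thm:bitonic_upward_straightline}, obtaining an upward planar straight-line drawing of $G$ of size $(2|V|-2)\times(|V|-1)$. In particular, every planar $st$-graph would then admit an upward planar straight-line drawing of area $O(|V|^2)$.

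Next I would invoke the established result that there are families of planar $st$-graphs for which \emph{any} upward planar straight-line drawing requires exponential area~\cite{2014arXiv1410.1006D,Battista:1992fk}. Picking a sufficiently large member $G$ of such a family, the assumed bitonic $st$-ordering together with Theorem~\ref{thm:bitonic_upward_straightline} would yield a quadratic-area upward planar straight-line drawing of $G$, directly contradicting the exponential lower bound. Hence the assumption fails, and there must exist a planar $st$-graph that admits no bitonic $st$-ordering, which is precisely the claim.

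The one point that genuinely needs care, and which I expect to be the only real obstacle, is making sure the cited hard instances actually fall within the scope of Theorem~\ref{thm:bitonic_upward_straightline}: namely that they are bona fide planar $st$-graphs (planar, acyclic, with a single source and single sink) rather than merely upward planar graphs, and that the area measure used in the lower bound is compatible with the one delivered by the theorem up to polynomial factors. Since the references~\cite{2014arXiv1410.1006D,Battista:1992fk} state their lower bounds for exactly this class of graphs and for upward planar straight-line drawings, this compatibility holds, and no further computation is required; the corollary then follows immediately from the contradiction above.
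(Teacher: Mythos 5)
Your proposal is correct and follows essentially the same argument as the paper: both derive the corollary by combining the quadratic-area guarantee of Theorem~\ref{thm:bitonic_upward_straightline} with the known exponential area lower bounds for upward planar straight-line drawings of certain planar $st$-graphs~\cite{2014arXiv1410.1006D,Battista:1992fk}. The only cosmetic difference is that you phrase it as a global proof by contradiction, whereas the paper states it directly as a contrapositive for the hard instances.
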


While this had to be expected, we now have to solve an additional problem. 
Before we think about how to compute a bitonic $st$-ordering,
we must first be able to recognize planar $st$-graphs that admit such an ordering.
\section{Characterization, recognition \& ordering}
We proceed as follows: As a first step, we identify a necessary condition
that a planar $st$-graph has to meet for admitting a bitonic $st$-ordering.
Then we exploit this condition to compute a bitonic $st$-ordering which proves sufficiency.
We start with an alternative characterization of bitonic sequences.
Since we will use the labels of an $st$-ordering, we can assume that the elements
are pairwise distinct. 
\newcommand{\lembitonic}{
An ordered sequence $A = \{ a_1, \ldots, a_n \}$ of pairwise distinct elements is bitonic increasing
if and only if the following holds:
\[
\forall 1 \leq i < j < n \; : \; a_{i} < a_{i+1} \vee a_{j} > a_{j+1}.
\]}
\begin{lemma}\label{lem:bitonic_alternative}
\lembitonic
\end{lemma}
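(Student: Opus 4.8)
The plan is to reformulate the stated condition in terms of \emph{ascents} and \emph{descents}. For $1 \le i < n$ call position $i$ an \emph{ascent} if $a_i < a_{i+1}$ and a \emph{descent} if $a_i > a_{i+1}$; since the elements are pairwise distinct, every such position is exactly one of the two. With this vocabulary the right-hand side of the lemma reads: for all $1 \le i < j < n$, position $i$ is an ascent or position $j$ is a descent. Taking the contrapositive of the inner disjunction, this is equivalent to the statement that there is \emph{no} pair $i < j$ with $i$ a descent and $j$ an ascent---in other words, every ascent precedes every descent. I would make this equivalence explicit first, as it is the conceptual heart of the argument and reduces both directions to a direct comparison with the definition of a bitonic increasing sequence.

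For the forward direction I would assume $A$ is bitonic increasing, so, using distinctness, there is an index $h$ with $a_1 < \cdots < a_h > \cdots > a_n$. Here positions $1, \ldots, h-1$ are precisely the ascents and positions $h, \ldots, n-1$ are precisely the descents. Given any $i < j < n$, if $i < h$ then $i$ is an ascent, and otherwise $i \ge h$ forces $j > i \ge h$, so that $j$ is a descent; in either case the required disjunction holds.

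For the backward direction I would argue the contrapositive reformulation directly. Assuming the condition, suppose for contradiction that some descent at position $i$ precedes some ascent at position $j$ with $i < j$. Since $j$ is an ascent we have $j < n$, hence $i < j < n$; but then position $i$ is not an ascent and position $j$ is not a descent, violating the disjunction. Therefore every ascent precedes every descent, and I can set $h$ to be one more than the largest ascent position, or $h = 1$ if no ascent exists. By construction all positions before $h$ are ascents and all positions from $h$ on are descents, which yields $a_1 < \cdots < a_h > \cdots > a_n$, i.e.\ $A$ is bitonic increasing.

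There is no serious obstacle here; once the contrapositive ``no descent before an ascent'' has been isolated, the argument is an essentially tautological rewriting of the definition. The only point requiring mild care is handling the degenerate cases in the backward direction---a sequence consisting entirely of ascents (peak at $a_n$, so $h = n$) or entirely of descents (peak at $a_1$, so $h = 1$)---which the chosen definition of $h$ already covers.
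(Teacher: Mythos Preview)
Your proof is correct and follows essentially the same approach as the paper's: both directions hinge on the contrapositive observation that no descent may precede an ascent, and your choice of $h$ (one past the last ascent, or $1$ if none) coincides with the paper's choice (the position of the first descent, or $n$ if none). The ascent/descent vocabulary and the explicit handling of the degenerate all-ascent and all-descent cases are cosmetic refinements rather than a different route.
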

\begin{sketchofproof}
For~\lq\lq$\Rightarrow$\rq\rq, assume to the contrary which yields $i \geq j$.
For~\lq\lq$\Leftarrow$\rq\rq, we choose, if exists, $h = \min\{j \; | \; a_j > a_{j+1}\}$, otherwise
we set $h = n$.\qed
\end{sketchofproof}

In general a planar $st$-graph may have many $st$-orderings,
some of them being bitonic while others are not.
To deal with this in a more formal manner,
we introduce some additional notation.
Given an embedded planar $st$-graph $G = (V,E)$,
we refer with $\Pi(G)$ to all feasible $st$-orderings of $G$, that is,
\[
\Pi(G) = \{ \pi : V \mapsto \{1, \ldots, |V|\} \;| \;\pi \text{ is an $st$-ordering for } G \}.
\]
Furthermore, let $\Pi_b(G)$ be the subset of $\Pi(G)$ that contains
all bitonic $st$-orderings. By definition, we can describe $\Pi_b(G)$ by
\begin{equation*}
\Pi_b(G) = \{ \pi \in \Pi(G) \mid \forall u \in V \; : \; S(u) \text{ is bitonic with respect to } \pi\}.
\end{equation*}

\noindent Applying the alternative characterization of bitonicity from Lemma~\ref{lem:bitonic_alternative}
to the bitonic property of the successor lists $S(u)$ yields the following expression for
the existence of a bitonic $st$-ordering:
\begin{equation}\label{eq:bitonic_st_1}
\begin{split}
\exists \pi \in \Pi_b(G) \Leftrightarrow&\;\exists \pi \in \Pi(G) \quad \forall u \in V \text{ with } S(u)=\{v_1, \ldots, v_m \}\\
 &\;\gforall 1\leq i < j < m \;:\; \pi(v_{i}) < \pi(v_{i+1}) \vee \pi(v_{j}) > \pi(v_{j+1}).
\end{split}
\end{equation}

\begin{figure}[t]
\centering
   \begin{minipage}[b]{.39\textwidth}
        \centering
        \subfloat[\label{fig:bitonic_forbidden_config}{}]
        {\includegraphics[page=1, scale=0.9]{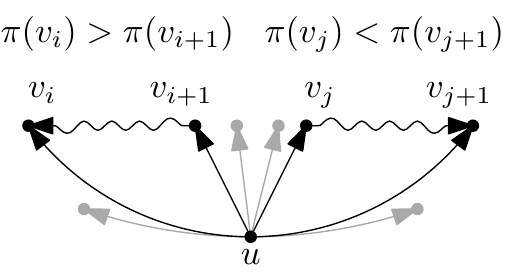}}
    \end{minipage}
    \begin{minipage}[b]{.19\textwidth}
        \centering
        \subfloat[\label{fig:bitonic_st_condition_1}{}]
        {\includegraphics[page=2, scale=0.9]{bitonic_st_condition_new}}
    \end{minipage}
    \begin{minipage}[b]{.19\textwidth}
        \centering
        \subfloat[\label{fig:bitonic_st_condition_2}{}]
        {\includegraphics[page=3, scale=0.9]{bitonic_st_condition_new}}
    \end{minipage}
     \begin{minipage}[b]{.19\textwidth}
        \centering
        \subfloat[\label{fig:bitonic_st_condition_3}{}]
        {\includegraphics[page=4, scale=0.9]{bitonic_st_condition_new}}
    \end{minipage}
    \caption{(a)~A successor list $S(u) = \{ \ldots, v_i, v_{i+1}, \ldots, v_j, v_{j+1}, \ldots \}$ with $i < j$ and a
			forbidden configuration of paths $v_{i+1} \rightsquigarrow v_{i}$ and $v_j \rightsquigarrow v_{j+1}$.
    		 (b)-(d)~The three cases at a face between two successors $v_i$ and $v_{i+1}$ of the face-source $u$:             
                 (b)~$v_{i+1}$ is the sink of the face indicating the existence of a path from $v_{i}$ to $v_{i+1}$.
                 (c)~A path from $v_{i+1}$ to $v_{i}$ results in a face having $v_i$ as sink.
                 (d)~There exists no path between $v_i$ and $v_{i+1}$, if and only if neither $v_i$ nor $v_{i+1}$ is the face-sink.}\label{fig:bitonic_st_condition_x}
\end{figure}

\noindent Next we translate this expression from $st$-orderings to the existence of paths. 
Consider a path from some vertex $u$ to some other vertex $v$ in $G$,
then for every $\pi \in \Pi(G)$, by the definition of $st$-orderings, $\pi(u) < \pi(v)$ holds.
Now it is not hard to imagine that if there exists $\pi \in \Pi_b(G)$, then
there must exist configurations of paths that are forbidden. 
To clarify this, let us rewrite the last part of the condition in Equation~\ref{eq:bitonic_st_1}, that is, $\pi(v_{i}) < \pi(v_{i+1}) \vee \pi(v_{j}) > \pi(v_{j+1})$,
using a simple boolean transformation, which yields \mbox{$\neg (\pi(v_{i}) > \pi(v_{i+1}) \wedge \pi(v_{j}) < \pi(v_{j+1}))$}.
So if there exists a path from $v_{i+1}$ to $v_i$ and one from $v_j$ to $v_{j+1}$ with $i < j$,
then this expression evaluates to false for every $\pi \in \Pi(G)$. 
Therefore, we may refer to the pair of paths $v_{i+1} \rightsquigarrow v_i$ and $v_j \rightsquigarrow v_{j+1}$ with $i < j$ 
as a \emph{forbidden configuration} of paths.
See Fig.~\ref{fig:bitonic_forbidden_config} for an illustration.

We may state now that in case there exists a bitonic $st$-ordering, the aforementioned configuration of paths cannot exist:
\begin{equation*}
\begin{split}
\exists \pi \in \Pi_b(G) \Rightarrow&\; \forall u \in V \text{ with } S(u)=\{v_1, \ldots, v_m \}\\
 &\;\gforall 1\leq i < j < m \;:\; v_{i+1} \npathto v_{i} \vee v_{j} \npathto v_{j+1}.
\end{split}
\end{equation*}
Conversely, if we find an $u$ with $v_i$ and $v_j$ in a graph for which these paths exist, 
then we can safely reject it as one that does not admit a bitonic $st$-ordering. 
The following well-known property of planar $st$-graphs will prove itself useful when it comes to 
testing for the existence of a path between two vertices.
\begin{lemma}\label{lem:bitonic_face_paths}
Let $F$ be the subgraph of an embedded planar $st$-graph $G = (V,E)$ 
induced by a face that is not the outer face\footnote{This restriction is necessary due to the possible absence of the $st$-edge
which is allowed by our definition of planar $st$-graphs.}, and $u,v$ two vertices of 
$F$, that is, $u$ and $v$ are on the boundary of the face. Then there exists
a path from $u$ to $v$ in $G$, if and only if there exists such a path in $F$.
\end{lemma}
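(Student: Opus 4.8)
The plan is to prove the nontrivial direction, since the backward direction is immediate: any path in $F$ is also a path in $G$, because $F$ is a subgraph of $G$. So the whole content lies in showing that a path $u \pathto v$ in $G$ can be rerouted to stay entirely on the boundary of the single face under consideration. Let me set up notation: $F$ is the subgraph induced by an interior face $f$ of the embedded planar $st$-graph $G$, and by the bimodality of planar $st$-graphs, the boundary of $f$ consists of exactly two directed paths sharing a common source (the \emph{face-source}) and a common sink (the \emph{face-sink}). Call these the \emph{left path} and \emph{right path} of the face. This two-path structure is the well-known characterization of faces in planar $st$-graphs and is what I would invoke as the structural backbone.

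First I would assume a path $P : u \pathto v$ exists in $G$ and argue it can be replaced by a path lying on $\partial f$. The key idea is a planarity/separation argument: the face $f$ is a topological disk whose boundary $\partial f$ separates the plane locally, and since $u$ and $v$ both lie on $\partial f$, they partition the boundary into arcs. Using the fact that the two boundary paths of $f$ run monotonically from face-source to face-sink, I would locate where $u$ and $v$ sit on these two paths (each vertex lies on the left path, the right path, or both if it is the face-source or face-sink). I would then do a short case analysis on the relative positions of $u$ and $v$ along these monotone paths and exhibit, in each case, a directed boundary path from $u$ to $v$ — essentially following one of the two monotone sides of the face.

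The main obstacle I anticipate is ruling out the configuration where $u$ and $v$ lie on \emph{opposite} sides of the face (one on the left path, one on the right path) in such a way that no boundary path connects them in the correct direction, yet a path $u \pathto v$ still exists through the interior of $G$. Here I would lean on a winding/crossing argument: a directed path from $u$ to $v$ through the exterior of $f$, combined with one of the boundary paths, would enclose $f$ and force either a cycle (contradicting acyclicity of $G$) or a violation of the single-source/single-sink property. More concretely, the existence of an ``interior'' path $u \pathto v$ together with the opposing boundary orientations would create a directed cycle when closed up against $\partial f$, contradicting that $G$ is a planar $st$-graph (a DAG). This is where the acyclicity and the fixed embedding do the real work.

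Finally I would assemble these pieces: the forward direction reduces to the claim that whenever $u \pathto v$ in $G$ with both on $\partial f$, the relative position of $u$ and $v$ on the two monotone boundary paths must be \emph{consistent} with the orientation (i.e. $u$ cannot lie strictly ``above'' $v$ on either side), and in every consistent case a boundary path in $F$ witnesses $u \pathto v$. I expect this to be a clean consequence of the two-monotone-path face structure plus acyclicity, so the proof should remain short; the only delicate point is the careful bookkeeping of which boundary arc to follow when $u$ and $v$ straddle both the face-source and face-sink.
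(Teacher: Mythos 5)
First, a point of reference: the paper does not prove this lemma at all; it states it as a known property and points to de~Fraysseix et al.\ for a proof. So there is no in-paper argument to compare against, and your proposal has to be judged on its own. Your outline assembles the right structural ingredients: the backward direction is indeed trivial, the boundary of an inner face of a planar $st$-graph is indeed two directed paths from a face-source $p$ to a face-sink $q$, and the only substantive case is indeed when $u$ and $v$ lie strictly on opposite boundary paths (all other cases reduce to comparability along a single monotone path plus acyclicity). However, the decisive step---showing that no directed path $P : u \pathto v$ can run through the exterior of $f$ in that case---is not actually carried out, and the mechanism you offer for it fails. You say that $P$ ``together with the opposing boundary orientations would create a directed cycle when closed up against $\partial f$.'' It cannot: both boundary paths are directed from $p$ to $q$, so from $v$ the boundary only leads forward to $q$ and only $p$ can reach $u$ along the boundary; no concatenation of $P$ with boundary arcs is a directed closed walk. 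Acyclicity applied locally to the face refutes nothing here.

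The argument that does work---which your alternative phrase ``violation of the single-source/single-sink property'' gestures at without supplying---is global. After reducing to the case where $P$ meets $\partial f$ only in $u$ and $v$ (split $P$ at intermediate boundary vertices and induct), $P$ together with the two boundary arcs $A_p$ (through $p$) and $A_q$ (through $q$) forms a theta graph. Since $f$ is a bounded face, exactly one of the simple closed curves $P \cup A_p$ and $P \cup A_q$ contains $f$ in its bounded region, and hence traps $q$ or $p$ respectively. If $P \cup A_p$ traps $q$, then $q$ is separated from $t$, which lies on the outer face; the directed path $q \pathto t$ must therefore pass through a vertex $y$ of $P \cup A_p$, and in every placement of $y$ one obtains both $q \pathto y$ and $y \pathto q$ (via $P$ followed by the right boundary path, or via the boundary alone), i.e.\ a directed cycle. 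The case where $P \cup A_q$ traps $p$ is symmetric, using $s \pathto p$ and $s$ on the outer face. So the contradiction comes from trapping the face-sink (or face-source) away from $t$ (or $s$), not from closing $P$ directly against $\partial f$; without this step, the one nontrivial claim of the lemma is left unproved.
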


There are several ways to prove this result, one proof can be found in the
work of de~Fraysseix et al.~\cite{deFraysseix1995157}. 
Notice that Lemma~\ref{lem:bitonic_face_paths} is concerned with every pair of vertices incident to the face.
But we are only interested in paths between two consecutive successors $v_{i}$ and $v_{i+1}$
of a vertex $u$. Notice that $v_{i}, v_{i+1}$ and $u$ share a common face which is not the 
outer face and in which $u$ is the face-source. Fig.~\ref{fig:bitonic_st_condition_1}-d illustrates
all three possible cases: $v_i \pathto v_{i+1}$~\subref{fig:bitonic_st_condition_1}, $v_{i+1} \pathto v_{i}$~\subref{fig:bitonic_st_condition_2}, 
and no path at all~\subref{fig:bitonic_st_condition_3}. Hence, we can decide the existence of a path based on the sink of the common face.

To prove that the absence of forbidden configurations is sufficient for the existence of a bitonic $st$-ordering, we require 
the following technical proposition.

\newcommand{\probitonicpaths}{
Given an embedded planar $st$-graph $G = (V,E)$ and a vertex $u \in V$ with successor list $S(u) = \{v_1, \ldots, v_m \}$.
If it holds that
\begin{equation*}
\gforall 1\leq i < j < m \; : \; v_{i+1} \npathto v_{i} \vee v_{j} \npathto v_{j+1},
\end{equation*}
then there exists $1 \leq h \leq m$ such that
\begin{equation*}
	(\gforall 1 \leq i < h :  v_{i+1} \npathto v_{i}) \wedge (\gforall h \leq i < m :  v_{i} \npathto v_{i+1})
\end{equation*}
holds. In other words, there exists at least one $v_h$ in $S(u)$ whose preceding vertices in $S(u)$ are only connected by paths in
clockwise direction, whereas paths between following vertices are directed counterclockwise.}

\begin{proposition}\label{pro:bitonic_h}
\probitonicpaths
\end{proposition}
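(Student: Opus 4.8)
The plan is to classify, for each consecutive pair of successors $v_i, v_{i+1}$ with $1 \leq i < m$, the directed relationship between them, and then show that the hypothesis forces all pairs joined by a clockwise path to precede all pairs joined by a counterclockwise path, so that a single threshold $h$ cleanly separates the two regimes. Concretely, I would call position $i$ \emph{forward} if $v_i \pathto v_{i+1}$ and \emph{backward} if $v_{i+1} \pathto v_i$. Since $G$ is a planar $st$-graph and hence acyclic, no position can be simultaneously forward and backward: the two paths would otherwise concatenate into a directed cycle through $v_i$ and $v_{i+1}$. A position may of course be neither, which is harmless. Note that for this classification I only need acyclicity, not the full face structure of Lemma~\ref{lem:bitonic_face_paths}.

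Next I would reread the hypothesis in this language. The term $v_{i+1} \npathto v_{i} \vee v_{j} \npathto v_{j+1}$ for $i < j$ states precisely that position $i$ is not backward or position $j$ is not forward; equivalently, no backward position may carry an index strictly smaller than any forward position. Combined with the acyclicity observation (which forbids a single position being both), this yields the intended structural separation: every backward position has a strictly larger index than every forward position, so the forward positions form a prefix and the backward positions a suffix of $\{1,\dots,m-1\}$.

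The threshold is then forced, and I would make the choice explicit rather than invoke the separation abstractly. Set $h = \min\{\, i : \text{position } i \text{ is backward}\,\}$, and $h = m$ if no position is backward; in either case $1 \leq h \leq m$. By minimality of $h$, every position $i < h$ is non-backward, which is exactly the left conjunct $\gforall 1 \leq i < h : v_{i+1} \npathto v_{i}$. For the right conjunct, take any $i$ with $h \leq i < m$: if $i = h$, then acyclicity gives that position $h$, being backward, is not forward, so $v_h \npathto v_{h+1}$; if $i > h$, then the hypothesis applied to the pair $(h,i)$ together with the fact that $h$ is backward forces that $i$ is not forward, i.e. $v_i \npathto v_{i+1}$. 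When no backward position exists this index range is empty and the right conjunct holds vacuously, so $h = m$ is valid.

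This is fundamentally a monotone-threshold argument, so I do not expect a genuine obstacle; the only care needed is in the boundary cases (no forward positions, or no backward positions) and in the single subtle point that a position cannot be both forward and backward, which is where strictness of the separation $\max(\text{forward}) < \min(\text{backward})$ comes from. The bookkeeping to confirm is merely that the chosen $h$ lies in $[1,m]$ and that both conjuncts hold simultaneously at this $h$, which the direct verification above already provides.
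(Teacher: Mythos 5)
Your proof is correct and follows essentially the same route as the paper: choose $h = \min\{i \mid v_{i+1} \pathto v_i\}$ (or $h=m$ if no such backward pair exists) and derive both conjuncts from minimality and the hypothesis. You are in fact slightly more careful than the paper's own proof at the boundary case $i=h$, where the hypothesis (which requires $i<j$ strictly) does not apply and acyclicity must be invoked to rule out $v_h \pathto v_{h+1}$.
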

\begin{sketchofproof}
If exists, set ${ h = \min\{ i \mid  v_{i+1} \pathto v_{i} \} }$, otherwise set $h = m$.
\qed\end{sketchofproof}

\begin{figure}[t]
\centering
    \begin{minipage}[b]{.48\textwidth}
        \centering
        \subfloat[\label{fig:bitonic_h_example}{}]
         {\includegraphics[page=1,scale=0.9]{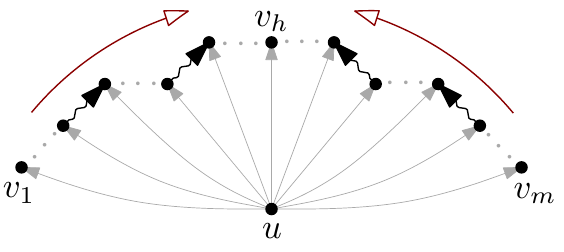}}
    \end{minipage}\hfill
    \begin{minipage}[b]{.48\textwidth}
        \centering
        \subfloat[\label{fig:make_bitonic_edges}{}]
         {\includegraphics[page=2,scale=0.9]{proposition_lemma}}
    \end{minipage}
    \caption{(a)~Paths orientations between consecutive successors of $u$. 
    			All of them directed towards $v_h$ as described by Proposition~\ref{pro:bitonic_h}.
                  (b)~The augmented graph $G'$ in the proof of Lemma~\ref{lem:bitonic_planar_st_graph_ordering} 
                  obtained by adding edges between consecutive successors of $u$ such that they are oriented towards $v_h$.}
                  \label{fig:bitonic_st_condition_0}
\end{figure}

The idea is now the following: If we have a graph that satisfies our necessary condition, 
then we can find for every $u \in V$ with $u \neq t$ a successor $v_h$ with the property 
as described in Proposition~\ref{pro:bitonic_h}. The intuition behind this property is that all paths 
that exist between successors of $u$, are directed in some way towards $v_h$. 
See Fig.~\ref{fig:bitonic_h_example} for an illustration.
The next lemma exploits this property to obtain a bitonic $st$-ordering,
which proves that this condition is indeed sufficient for the existence of a bitonic $st$-ordering.

\begin{lemma}\label{lem:bitonic_planar_st_graph_ordering}
Given a planar $st$-graph $G=(V,E)$ with a fixed embedding. 
If at every vertex $u \in V$ with successor list $S(u)=\{v_1, \ldots, v_m \}$ the following holds:
\[
\gforall 1\leq i < j < m \;:\;  v_{i+1} \npathto v_{i} \vee v_{j} \npathto v_{j+1},
\]
then $G$ admits a bitonic $st$-ordering $\pi \in \Pi_b(G)$.
\end{lemma}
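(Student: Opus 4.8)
The plan is to translate the path condition into additional edges and reduce the whole problem to computing an ordinary $st$-ordering of a larger planar $st$-graph. For every vertex $u \neq t$ with $S(u)=\{v_1,\ldots,v_m\}$, Proposition~\ref{pro:bitonic_h} supplies an index $h$ such that all paths among the successors of $u$ point ``towards'' $v_h$. I would build an augmented graph $G'=(V,E')$ from $G$ by inserting, for every consecutive pair $v_i,v_{i+1}$ between which no path exists (the case of Fig.~\ref{fig:bitonic_st_condition_3}), a new edge oriented towards $v_h$: the edge $(v_i,v_{i+1})$ when $i<h$ and the edge $(v_{i+1},v_i)$ when $i\geq h$; see Fig.~\ref{fig:make_bitonic_edges}. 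Each such edge is drawn inside the common face of $u$, $v_i$ and $v_{i+1}$, so $G'$ keeps the same planar embedding, and no parallel edge arises because an edge between $v_i$ and $v_{i+1}$ would itself be a path, which the case assumption excludes. Observe that the successor lists $S(u)$ are unchanged by the augmentation, as every new edge is incident only to successors of $u$ and never to $u$ itself.

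The heart of the proof is to show that $G'$ is again a planar $st$-graph, in particular that it stays acyclic, and I would argue this face by face. Each internal face of the planar $st$-graph $G$ has a unique face-source and a unique face-sink, with a boundary consisting of two directed paths between them. A face that receives a new chord is exactly one whose face-source is some $u$ and whose two edges at $u$ lead to the pair $v_i,v_{i+1}$; inserting $(v_i,v_{i+1})$ (or $(v_{i+1},v_i)$) splits this face into the triangle on $u,v_i,v_{i+1}$ and one remaining face, and a short check shows that each of the two new faces again has exactly one source and one sink. Since faces that receive no chord are untouched and the outer face is never modified (we only treat consecutive pairs $v_i,v_{i+1}$ with $1\leq i<m$), every face of $G'$ has exactly one source and one sink, and by the standard face characterization of planar $st$-graphs this makes $G'$ acyclic. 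Here the orientation ``towards $v_h$'' is what guarantees that $s$ and $t$ remain the unique source and sink: $s$ is never an endpoint of a new edge because new edges join successors, which have an incoming edge and hence differ from $s$; and $t$ is never the tail of a new edge because, whenever $t\in S(u)$, the observation following Lemma~\ref{lem:bitonic_face_paths} forces $t$ to be the face-sink of the faces it bounds, which pins $h$ to the position of $t$, so all chords at $u$ point into $t=v_h$.

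Once $G'$ is known to be a planar $st$-graph, I would take any $st$-ordering $\pi$ of $G'$, which exists by topological sorting. Because $E\subseteq E'$ and $G'$ has the same vertex set as $G$, the same $\pi$ is an $st$-ordering of $G$, i.e.\ $\pi\in\Pi(G)$. Finally, fix a vertex $u$ with peak index $h$. For $i<h$ either the chord $(v_i,v_{i+1})$ was inserted or a path $v_i\pathto v_{i+1}$ already existed, so in both cases $\pi(v_i)<\pi(v_{i+1})$; symmetrically, for $i\geq h$ we obtain $\pi(v_{i+1})<\pi(v_i)$. Hence $\pi(v_1)<\cdots<\pi(v_h)>\cdots>\pi(v_m)$, so $S(u)$ is bitonic with respect to $\pi$, and as this holds at every $u$ we conclude $\pi\in\Pi_b(G)$.

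I expect the main obstacle to be precisely the acyclicity of $G'$: the hypothesis forbids paths only \emph{pairwise} and only in the original graph $G$, so one must rule out that chords inserted at different vertices combine into a directed cycle. The face-by-face argument is what resolves this cleanly, since a directed cycle would force some face of $G'$ to lack a unique source or sink, contradicting the per-face split; this is ultimately where the absence of forbidden configurations, distilled into the peak $v_h$ of Proposition~\ref{pro:bitonic_h}, is essential.
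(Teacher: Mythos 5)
Your proposal is correct and follows essentially the same route as the paper: augment $G$ with edges oriented towards the peak $v_h$ from Proposition~\ref{pro:bitonic_h} inside the faces where no path exists, verify that $G'$ remains a planar $st$-graph, and take any $st$-ordering of $G'$. The only difference is in justifying acyclicity of $G'$, where the paper briefly invokes induction on the added edges (each inserted only where no path exists) while you argue via the one-source-one-sink property of the two faces created by each chord; your version spells out more explicitly why chords added at different vertices cannot combine into a cycle, but the construction and conclusion are identical.
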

\begin{proof}
To show that there exists $\pi \in \Pi_b(G)$, we augment $G$ into a new graph $G'$
by inserting additional edges that we refer to as $E'$.
These edges ensure that between every pair of consecutive successors in $G$, there exists a path in $G' = (V, E \cup E')$.
Afterwards, we show that every $st$-ordering $\pi \in \Pi(G')$ for $G'$ is a bitonic $st$-ordering for $G$.

For every vertex $u$ with successor list $S(u) = \{v_1, \ldots, v_m\}$, we may assume 
by Proposition~\ref{pro:bitonic_h} that there exists $1 \leq h \leq m$ 
such that for every $1 \leq i < h$ there exists no path from $v_{i+1}$ to $v_{i}$, and 
for every $h \leq i < m$ no path from $v_{i}$ to $v_{i+1}$ in $G$. 
Our goal is to add specific edges to fill the gaps such that there exist two paths in $G'$,
$v_1 \rightsquigarrow v_2 \rightsquigarrow \cdots \rightsquigarrow v_h \in G'$ and 
$v_m \rightsquigarrow v_{m-1} \rightsquigarrow \cdots \rightsquigarrow v_h \in G'$. 
Fig.~\ref{fig:make_bitonic_edges} illustrates the idea.
More specifically, for every $1 \leq i < m$, there are three cases to consider:
(i)~There already exists a path between $v_i$ and $v_{i+1}$ in $G$, that is,
$v_i \rightsquigarrow v_{i+1} \in G$ or $ v_{i+1} \rightsquigarrow v_{i} \in G$.
Proposition~\ref{pro:bitonic_h} ensures that the path is directed towards $v_h$, thus, we just skip the pair.
(ii)~If there exists no path between $v_i$ and $v_{i+1}$ in $G$ and $i < h$ holds, 
we add an edge from $v_i$ to $v_{i+1}$. (iii)~When there also exists no path between 
$v_i$ and $v_{i+1}$, but now $h \leq i < m$ holds, we add the reverse edge $(v_{i+1}, v_i)$ to $E'$.

Before we continue, we show that $G' = (V, E \cup E')$ is $st$-planar.
Consider a single edge in $E'$ which has been added either by case (ii) or (iii) 
while traversing the successors $S(u)$ of some vertex $u \in V$. 
This edge will be added to a face in which $u$ is the source, and since every face
has only one source, only one edge will be added to the corresponding face, hence, planarity is preserved.
Since case (ii) and (iii) only apply, when there exists no path between the two vertices, 
adding this edge will not generate a cycle. Induction on the number of added edges yields then $st$-planarity for $G'$.

Consider now an $st$-ordering $\pi \in \Pi(G')$. Since clearly $E' \subseteq E \cup E'$ holds,
$\pi$ is also an $st$-ordering for $G$, that is, $\Pi(G') \subseteq \Pi(G)$ holds. 
Recall that we constructed $G'$ such that 
for every $u \in V$ with $S(u) = \{v_1, \ldots, v_m\}$, 
there exists $v_1 \rightsquigarrow v_2 \rightsquigarrow \cdots \rightsquigarrow v_h \in G'$
and $v_m \rightsquigarrow v_{m-1} \rightsquigarrow \cdots \rightsquigarrow v_h  \in G'$.
It follows that for every $\pi \in \Pi(G')$
\[
	\gforall 1 \leq i < h : \pi(v_{i}) < \pi(v_{i+1}) \; \wedge \; \gforall h \leq i < m :  \pi(v_{i}) > \pi(v_{i+1})
\] holds, which implies that $S(u)$ is bitonic with respect to $\pi$.
Since this holds for all $u\in V$, it follows that $\Pi(G') \subseteq \Pi_b(G)$.
Moreover, $G'$ has at least one $st$-ordering, that is, $\Pi(G') \neq \emptyset$, thus, there exists
$\pi \in \Pi_b(G)$.
\qed\end{proof}

Let us summarize the implications of the lemma. The only requirement is that the
graph complies with our necessary condition, that is, the absence of forbidden configurations. 
If this is the case, then Lemma~\ref{lem:bitonic_planar_st_graph_ordering}
provides us with a bitonic $st$-ordering, which in turn proves that this condition is sufficient.
\begin{equation*}
\begin{split}
\exists \pi \in \Pi_b(G) \Leftrightarrow&\; \forall u \in V \text{ with } S(u)=\{v_1, \ldots, v_m \}\\
 &\;\gforall 1\leq i < j < m \;:\;  v_{i+1} \npathto v_{i} \vee v_{j} \npathto v_{j+1} \\
\end{split}
\end{equation*}

\begin{algorithm}[t]
\SetKwInOut{Input}{input}
\SetKwInOut{Output}{output}
\SetKwData{decreasing}{decreasing}
\SetKw{KwTrue}{true}
\SetKw{KwFalse}{false}
\SetKw{UpTo}{to}
\Input{Embedded planar $st$-graph $G = (V,E)$ with $S(u)$ for every $u \in V$.}
\Output{If exists, a bitonic $st$-ordering $\pi$ for $G$.}
\Begin{

	$E' \gets \emptyset$\;
	\For{$u \in V$ with $S(u) = \{v_1, \ldots, v_m\}$ }
	{
		$\textit{decreasing} \gets \KwFalse$\;
		\For{$i = 1$ \UpTo $m-1$}{
			$w \gets \textsc{faceSink}(u, v_i, v_{i+1})$\;
			$\textbf{if }w = v_{i+1} \textbf{ and } \textit{decreasing} \text{ \bf then }$\Return \textsc{reject}\;
			$\textbf{if }w = v_{i} \textbf{ then }\textit{decreasing}\gets \KwTrue$\;
			\If{$v_{i} \neq w \neq v_{i+1}$}{
				$\textbf{if }\textit{decreasing}\textbf{ then } E' \gets E' \cup (v_{i+1}, v_{i}) \textbf{ else } E' \gets E' \cup (v_{i}, v_{i+1})$\;
			}
		}
	}
	compute $\pi \in \Pi(V, E \cup E')$\;
	\Return $\pi$
}
\caption{\small Recognition and ordering algorithm for planar-st graphs}\label{alg:bitonic_recognition}
\end{algorithm}

With a full characterization now at our disposal and in combination with Lemma~\ref{lem:bitonic_face_paths}, 
we are able to describe a simple linear-time algorithm 
(Algorithm~\ref{alg:bitonic_recognition}) which tests a given graph and in case it admits
a bitonic $st$-ordering, computes one. We iterate over $S(u)$ and as long as there is no path $v_{i+1} \pathto v_i$,
we assume $i < h$ and fill possible gaps. Once we encounter a path $v_{i+1} \pathto v_i$ for the first time,
we implicitly set $h = i$ via the flag and continue to add edges, but now the reverse ones.
But in case we find a path $v_{i} \pathto v_{i+1}$, then it forms with $v_{h+1} \pathto v_h$ a forbidden configuration 
and the graph can be rejected. If we succeed in all successor list, an $st$-ordering for $G'$ is computed,
which is a bitonic one for $G$. Since $G'$ is $st$-planar and has the same vertex set as $G$,
we can claim that the overall runtime is linear. Let us state this as the main result of this section.

\begin{theorem}\label{thm:bitonic_st_ordering_main_1}
Deciding whether an embedded planar $st$-graph $G$ admits a bitonic $st$-ordering $\pi$ or not is linear-time solvable.
Moreover, if $G$ admits such an ordering, $\pi$ can be found in linear time.
\end{theorem}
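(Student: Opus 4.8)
The plan is to establish the theorem by proving that Algorithm~\ref{alg:bitonic_recognition} is correct and runs in linear time; the correctness half rides almost entirely on the characterization already obtained, namely that $G$ admits a bitonic $st$-ordering if and only if no vertex $u$ with $S(u)=\{v_1,\ldots,v_m\}$ exhibits a forbidden configuration $v_{i+1}\pathto v_i$, $v_j\pathto v_{j+1}$ with $i<j$. Thus the first task is simply to check that the algorithm faithfully (i)~tests each successor list for a forbidden configuration and (ii)~when none is present, builds the augmenting set $E'$ exactly as in the proof of Lemma~\ref{lem:bitonic_planar_st_graph_ordering}.

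For step (i) I would invoke Lemma~\ref{lem:bitonic_face_paths}: since $u$, $v_i$, $v_{i+1}$ share a common internal face in which $u$ is the source, the direction of any path between $v_i$ and $v_{i+1}$ is read off from that face's sink $w=\textsc{faceSink}(u,v_i,v_{i+1})$, giving $w=v_{i+1}\Leftrightarrow v_i\pathto v_{i+1}$, $w=v_i\Leftrightarrow v_{i+1}\pathto v_i$, and otherwise no path (the three cases of Fig.~\ref{fig:bitonic_st_condition_x}). The \texttt{decreasing} flag then plays the role of the index $h$ of Proposition~\ref{pro:bitonic_h}: it is switched on at the first $i$ with $w=v_i$, i.e. $h=\min\{i\mid v_{i+1}\pathto v_i\}$. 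I would argue that the rejection test fires precisely on a forbidden configuration, because $w=v_{i+1}$ with \texttt{decreasing} already true means some earlier $j<i$ had $v_{j+1}\pathto v_j$ while now $v_i\pathto v_{i+1}$; conversely, if the loop completes, the inserted edges are exactly those of cases (ii) and (iii), so every $\pi\in\Pi(V,E\cup E')$ is bitonic for $G$ by Lemma~\ref{lem:bitonic_planar_st_graph_ordering}. Combined with the iff characterization, this settles correctness.

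For the running time I would first precompute, in a single traversal of the embedding, the unique sink of every face, after which each \textsc{faceSink} call costs $O(1)$. The loop visits each consecutive successor pair once, and $\sum_{u}(\deg^{+}(u)-1)=O(|E|)=O(|V|)$ by planarity, so the augmentation phase is linear. By the planarity argument in Lemma~\ref{lem:bitonic_planar_st_graph_ordering}, at most one edge is added per internal face, whence $|E'|=O(|V|)$ and $G'=(V,E\cup E')$ is again an $st$-planar graph on the same vertex set; computing $\pi\in\Pi(G')$ is then a topological sort in time $O(|V|+|E\cup E'|)=O(|V|)$~\cite{Cormen:2009}.

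The step I expect to require the most care is the running-time bookkeeping rather than the logic: one must confirm that the face sinks can be tabulated so that the per-pair work is genuinely constant, and that $E'$ stays linear so the final sort does not blow up. Both follow from the ``at most one added edge per face'' observation together with planarity, so no new machinery is needed and the remaining verification is routine.
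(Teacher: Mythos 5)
Your proposal is correct and follows essentially the same route as the paper: it establishes the theorem by verifying that Algorithm~\ref{alg:bitonic_recognition} implements the characterization, using Lemma~\ref{lem:bitonic_face_paths} to read path directions off face sinks, identifying the \texttt{decreasing} flag with the index $h$ of Proposition~\ref{pro:bitonic_h}, and reusing the augmentation argument of Lemma~\ref{lem:bitonic_planar_st_graph_ordering} together with the one-edge-per-face and topological-sort observations for linearity. Your write-up is in fact somewhat more explicit about the correctness of the rejection test and the running-time bookkeeping than the paper's own informal justification, but there is no substantive difference in approach.
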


Next we will consider the case in which no bitonic $st$-ordering exists.
Although our initial motivation was to create upward planar straight-line drawings,
we now allow bends and shift our efforts to upward planar poly-line drawings.

\section{Upward planar poly-line drawings with few bends}
We start with a simple observation. Consider a forbidden configuration
consisting of two paths $v_{i+1} \rightsquigarrow v_{i}$ and $v_{j} \rightsquigarrow v_{j+1}$ 
with $i < j$ between successors of a vertex $u$ as shown in Fig.~\ref{fig:bitonic_forbidden_config}.
Notice that $(u,v_i)$ and $(u,v_{j+1})$ are transitive edges. Since a reduced planar $st$-graph has
no transitive edges, we can argue the following.
\begin{corollary}
Every reduced planar $st$-graph admits a bitonic $st$-ordering.
\end{corollary}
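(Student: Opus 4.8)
The plan is to show that a reduced planar $st$-graph cannot contain a forbidden configuration, and then invoke the characterization established just above (the biconditional relating $\exists \pi \in \Pi_b(G)$ to the absence of forbidden configurations, proven via Lemma~\ref{lem:bitonic_planar_st_graph_ordering}). Recall that a reduced planar $st$-graph is defined as one without transitive edges, where an edge $(x,y) \in E$ is \emph{transitive} if there also exists a path $x \pathto y$ in $G$ that does not use the edge $(x,y)$ itself, i.e.\ a directed path of length at least two from $x$ to $y$.

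The heart of the argument is the observation already made in the text preceding the corollary: in any forbidden configuration at a vertex $u$, with successors $v_i$ and $v_{j+1}$ (where $i < j$) joined by paths $v_{i+1} \pathto v_i$ and $v_j \pathto v_{j+1}$, the edges $(u, v_i)$ and $(u, v_{j+1})$ are necessarily transitive. First I would verify this claim carefully. Since $(u, v_{i+1}) \in E$ by definition of the successor list, and there is a path $v_{i+1} \pathto v_i$, concatenating the edge $(u, v_{i+1})$ with this path yields a directed path from $u$ to $v_i$ of length at least two. Because $(u, v_i) \in E$ is also an edge (as $v_i$ is a successor of $u$), this makes $(u, v_i)$ a transitive edge. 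The symmetric argument applies to $(u, v_{j+1})$: the edge $(u, v_j) \in E$ followed by the path $v_j \pathto v_{j+1}$ gives a path of length at least two from $u$ to $v_{j+1}$, rendering $(u, v_{j+1})$ transitive.

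With this established, the corollary follows immediately by contraposition. If $G$ is reduced, it has no transitive edges; hence no vertex $u$ can host a forbidden configuration, since the existence of one would force a transitive edge. Therefore the condition
\[
\gforall u \in V \text{ with } S(u) = \{v_1, \ldots, v_m\}, \; \gforall 1 \leq i < j < m : v_{i+1} \npathto v_i \vee v_j \npathto v_{j+1}
\]
holds vacuously for $G$. By the characterization (concretely, by Lemma~\ref{lem:bitonic_planar_st_graph_ordering}), this necessary and sufficient condition guarantees that $G$ admits a bitonic $st$-ordering.

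I expect the only genuinely delicate point to be a careful treatment of the path lengths in the transitivity argument — specifically ensuring that the concatenated paths have length at least two and are genuinely distinct from the direct edge, so that the edges $(u, v_i)$ and $(u, v_{j+1})$ truly qualify as transitive rather than trivially coinciding with the paths. One must also confirm that $v_i \neq v_{i+1}$ and that the successor indices are well-defined, which follows from the successors being distinct vertices as noted in the preliminaries. Beyond this, the argument is essentially a direct reading of the forbidden-configuration picture in Fig.~\ref{fig:bitonic_forbidden_config}, so no further machinery is needed.
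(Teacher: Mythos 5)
Your proof is correct and follows exactly the paper's own argument: the paper derives the corollary from the observation that in any forbidden configuration the edges $(u,v_i)$ and $(u,v_{j+1})$ are transitive, so a reduced planar $st$-graph has no forbidden configurations and Lemma~\ref{lem:bitonic_planar_st_graph_ordering} applies. You merely spell out the transitivity claim (concatenating $(u,v_{i+1})$ with $v_{i+1}\pathto v_i$, and $(u,v_j)$ with $v_j\pathto v_{j+1}$) in more detail than the paper does, which is a welcome addition but not a different route.
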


This leads to the idea to use the same transformation as Di~Battista~et~al.~\cite{Battista:1992fk} in their dominance-based 
approach. We can split every transitive edge to obtain a reduced planar $st$-graph and draw it upward planar straight-line.
Replacing the dummy vertices with bends results in an upward planar poly-line drawing with at most 
$2|V|-5$ bends, at most one bend per edge and quadratic area.

But we can do better using the following idea: If we have a single forbidden configuration, it suffices to split only one
of the two transitive edges. More specifically, if we split in Fig.~\ref{fig:bitonic_forbidden_config} 
the edge $(u,v_i)$ into two new edges $(u,v'_i)$ and $(v'_i, v_i)$ with $v'_i$ being the dummy vertex,
then $v'_i$ replaces $v_i$ in $S(u)$. But now there exists no path from $v_{i+1}$ to  $v'_i$, hence, the forbidden configuration
has been destroyed at the cost of one split. Moreover, a pair of transitive edges does not necessarily induce a forbidden configuration.
At this point the question arises how such a split affects other successor lists and if it may even create new forbidden configurations.
The following trivial observation is helpful in this regard.
\newcommand{\lempathsplit}{
Let $G' = (V',E')$ be the graph obtained from splitting an edge $(u,v)$ of a graph $G = (V,E)$ by inserting a dummy vertex $v'$.
More specifically, let $V' = V \cup \{ v' \}$ and $E' = (E - ( u,v )) \cup \{ (u,v'),(v',v) \}$. Then
for all $w, x \in V$ there exists a path $w \rightsquigarrow x \in G$, if and only if there exists a path $w \rightsquigarrow x \in G'$.
}
\begin{lemma}\label{lem:path_after_split}
\lempathsplit
\end{lemma}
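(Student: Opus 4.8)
The plan is to exploit the fact that the only structural difference between $G$ and $G'$ is the replacement of the single edge $(u,v)$ by the two-edge path $u \to v' \to v$ through the fresh vertex $v'$, together with the key observation that $v'$ has exactly one incoming edge $(u,v')$ and exactly one outgoing edge $(v',v)$ in $G'$ (since $v' \notin V$, no other edges of $E'$ are incident to it). Because we only need the \emph{existence} of a path, I would argue in terms of simple directed paths and use this substitution as the engine for both directions.

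For the direction $\Rightarrow$, I would take any path $P : w \rightsquigarrow x$ in $G$. If $P$ does not traverse $(u,v)$, then every edge of $P$ lies in $E - (u,v) \subseteq E'$, so $P$ is already a path in $G'$. If $P$ does traverse $(u,v)$, I replace that edge by the two edges $(u,v'),(v',v) \in E'$. The result is a walk from $w$ to $x$ in $G'$, and since $v' \notin V$ is touched at most once by this substitution, it is in fact a simple path $w \rightsquigarrow x \in G'$.

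For the direction $\Leftarrow$, I would take a path $P' : w \rightsquigarrow x$ in $G'$ with $w,x \in V$. If $P'$ avoids $v'$, all its edges lie in $E' - \{(u,v'),(v',v)\} = E - (u,v) \subseteq E$, so $P'$ is a path in $G$. If $P'$ visits $v'$, then because the only edge into $v'$ is $(u,v')$ and the only edge out of $v'$ is $(v',v)$, the path must enter $v'$ from $u$ and leave it to $v$; hence $P'$ contains the subpath $u \to v' \to v$, which I replace by the single edge $(u,v) \in E$ to obtain a path $w \rightsquigarrow x \in G$. The assumption $w,x \in V$ guarantees $w,x \neq v'$, so $v'$ never occurs as an endpoint and the substitution is always well defined.

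The argument is essentially bookkeeping, which is why the paper calls it a trivial observation. The only point that needs genuine care is the backward direction: one must invoke the degree-one property of $v'$ to conclude that every visit to $v'$ is sandwiched exactly between $u$ and $v$. This is precisely what rules out the split creating any spurious reachability among the original vertices and guarantees that the edge-for-path substitution is valid in both directions.
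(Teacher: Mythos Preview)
Your argument is correct and essentially identical to the paper's proof: both directions proceed by substituting $(u,v)$ for the two-edge path $u \to v' \to v$ (or vice versa), and the backward direction hinges on the same observation that since $w,x \neq v'$ and $v'$ has a unique in-edge and out-edge, any path through $v'$ must traverse $(u,v')$ and $(v',v)$ consecutively. The paper's write-up is simply terser.
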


Since a forbidden configuration is solely defined by the existence of paths, we can argue now
with Lemma~\ref{lem:path_after_split} that a split does not create nor resolves forbidden configurations
in other successor lists. However, one vertex that is not covered by the lemma is the dummy vertex itself, 
but it has only one successor which is insufficient for a forbidden configuration. This locality 
is of great value, because it enables us to focus on one successor list, instead of having to deal with a bigger picture.
Next we prove an upper bound on the number of edges to split in order to resolve all forbidden configurations.

\begin{lemma}\label{lem:bitonic_max_bends}
Every embedded planar $st$-graph $G = (V,E)$ can be transformed into a new one
that admits a bitonic $st$-ordering
by splitting at most \mbox{$|V|-3$ edges}.
\end{lemma}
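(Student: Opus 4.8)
The plan is to bound the number of splits by reasoning vertex-by-vertex over the successor lists, exploiting the locality established by Lemma~\ref{lem:path_after_split}. Since a split in one successor list neither creates nor destroys forbidden configurations in another list, I can treat each vertex $u \in V$ independently and sum the local costs. So first I would fix a vertex $u$ with successor list $S(u) = \{v_1, \ldots, v_m\}$ and count the minimum number of edges that must be split to make $S(u)$ compatible with a bitonic ordering, i.e. to destroy every forbidden configuration among the successors of $u$.

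For a single list, my approach is to view the pattern of path-directions between consecutive successors as a word over $\{{+},{-}\}$, where position $i$ records whether the face-sink between $v_i$ and $v_{i+1}$ indicates a clockwise ($v_i \pathto v_{i+1}$) or counterclockwise ($v_{i+1} \pathto v_i$) path, with a neutral symbol when no path exists. By Proposition~\ref{pro:bitonic_h}, the obstruction is precisely a counterclockwise path occurring before a later clockwise path (a ${-}$ appearing to the left of a ${+}$). The key insight from the preceding discussion is that splitting the transitive edge $(u,v_i)$ replaces $v_i$ by a dummy $v_i'$ for which no path to or from $v_{i+1}$ exists, thereby \emph{neutralizing} the offending position. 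I would therefore argue that the number of splits needed at $u$ equals the number of positions that must be neutralized to make the direction word bitonic, and that this is at most $m-1$ in the worst case but — crucially — can be charged to edges in a way that yields the global bound $|V|-3$.

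The heart of the argument is the global count. I expect the clean way to reach $|V|-3$ is to observe that each split targets a \emph{transitive} edge $(u,v_i)$, and to set up a charging scheme so that each such edge is counted once. Summing $\sum_{u}(|S(u)| - 1)$ naively overcounts, so I would instead bound the total number of \emph{required} splits by relating it to the number of faces or to the number of transitive edges, using the classical fact that a planar $st$-graph has at most $2|V|-5$ transitive edges as a starting point and then sharpening it by the observation that splitting just one edge per forbidden configuration suffices. A cleaner route may be an inductive or Euler-formula count: since $s$ and $t$ contribute no successor-list obstructions of the relevant kind ($t$ has no successors, and the outer face is excluded), I would aim to show the reachable positions across all vertices total at most $|V|-3$, mirroring the $n-3$ bend bound advertised in the abstract.

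The main obstacle will be the global counting step rather than the per-vertex analysis. Establishing that each necessary split can be associated to a distinct edge or face, and that these associations sum to exactly $|V|-3$ (and not, say, $2|V|-5$), requires a careful charging argument that avoids double-counting splits shared between overlapping forbidden configurations within the same list, while also accounting for the two excluded vertices $s$ and $t$ and the excluded outer face. I would expect to need Proposition~\ref{pro:bitonic_h} to guarantee that a single well-chosen $h$ per vertex limits how many positions are genuinely forced, so that the worst case across the whole graph telescopes to the tight bound.
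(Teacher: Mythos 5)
Your per-vertex setup is sound and matches the paper's: by Lemma~\ref{lem:path_after_split} splits are local to one successor list, and within a list the task is to neutralize, by splitting, every position whose path direction is incompatible with some choice of $h$ in the sense of Proposition~\ref{pro:bitonic_h}. But the heart of the lemma is the global count, and there your proposal has a genuine gap: you explicitly defer it to an unspecified charging scheme over transitive edges or faces, starting from the $2|V|-5$ bound on transitive edges and hoping to ``sharpen'' it. That route is not developed and does not obviously close. The paper's actual argument is much more elementary and is the one idea your sketch is missing: restrict attention to $h=1$ or $h=m$. Choosing $h=1$ forces a split for every consecutive pair with $v_i \pathto v_{i+1}$; choosing $h=m$ forces one for every pair with $v_{i+1} \pathto v_i$. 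Since $G$ is acyclic, no pair contributes to both counts, so the two counts sum to at most $m-1$ and the cheaper choice costs at most $(m-1)/2$. Summing $(|S(u)|-1)/2$ over all vertices gives $(|E|-|V|)/2 \leq (3|V|-6-|V|)/2 = |V|-3$.

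Note that the factor $\tfrac{1}{2}$ from this majority argument is exactly what produces the stated bound: without it, the same summation yields only $\sum_u (|S(u)|-1) = |E|-|V| \leq 2(|V|-3)$, i.e.\ twice the claim, and essentially no better than the $2|V|-5$ bound you propose as a starting point. Your remark that ``a single well-chosen $h$ per vertex limits how many positions are genuinely forced'' points in the right direction, but without the disjointness-by-acyclicity observation and the resulting halving, the telescoping to $|V|-3$ you anticipate does not materialize. (A minor additional caveat: minimizing over all $1 \leq h \leq m$, as your neutralization picture suggests, is what the later algorithm does for the minimum split set, but for this upper bound only the two extreme choices of $h$ are needed.)
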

\begin{proof}
Consider a vertex $u$ and its successor list $S(u) = \{ v_1, \ldots, v_m \}$ 
that contains multiple forbidden configurations of paths. Instead of arguing by means 
of forbidden configurations, we use our second condition from Proposition~\ref{pro:bitonic_h}, 
that is, the existence of a vertex $v_h$ such that every path that exists between 
two consecutive successors $v_i$ and $v_{i+1}$, 
is directed from $v_i$ towards $v_{i+1}$ for $i < h$, or 
from $v_{i+1}$ towards $v_{i}$ if $i \leq h$ holds.
Of course $h$ does not exist due to the forbidden configurations. 
But we can enforce its existence by splitting some edges.

Assume that we want $v_h$ to be the first successor, that is, $h = 1$. 
Then every path from $v_i$ to $v_{i+1}$ with $1 \leq i < m$ is in conflict with this choice.
We can resolve this by splitting every edge $(u,v_{i+1})$ for which a path $v_i \pathto v_{i+1}$ exists.
Clearly, the maximum number of edges to split is at most $m-1$, that is the
case in which for every $1 \leq i < m$, there exists a path from $v_i$ to $v_{i+1}$. 
However, there do not exist 
paths $v_i \pathto v_{i+1}$ and $v_{i+1} \pathto v_{i}$ at the same time, because $G$ is acyclic.
So, if the number of edges to split is more than $\frac{m-1}{2}$, then there are less than
$\frac{m-1}{2}$ paths of the form $v_{i+1} \rightsquigarrow v_{i}$. In that case, 
we may choose in a symmetric manner $v_h$ to be the last successor ($h = m$),
instead of being the first. Or in other words, we choose $v_h$ to be the first or the last successor, 
depending on the direction of the majority of paths. 
And as a result, at most $\frac{m-1}{2}$ edges have to be split.
Notice that the overall length of all successor lists is exactly the number of edges in the graph.
Hence, with $m = |S(u)|$ we get $\sum_{u \in V}{|S(u)|} = |E| \leq 3|V| - 6$, and the claimed upper bound can be derived by
\[
\sum_{u \in V}{\frac{|S(u)|-1}{2}} \leq \frac{3|V| - 6 - |V|}{2} = |V| - 3.
\] Moreover, the split procedure preserves $st$-planarity of $G$.
\qed\end{proof}

\begin{figure}[t]
\centering
    \begin{minipage}[b]{.33\textwidth}
        \centering
        \subfloat[\label{fig:bitonic_sharpness_0}{}]
        {\includegraphics[page=1, scale=0.85]{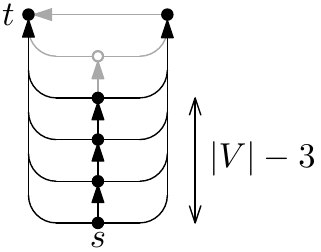}}
    \end{minipage}
    \begin{minipage}[b]{.66\textwidth}
        \centering
        \subfloat[\label{fig:bitonic_min_edges}{}]
        {\includegraphics[page=1, scale=0.9]{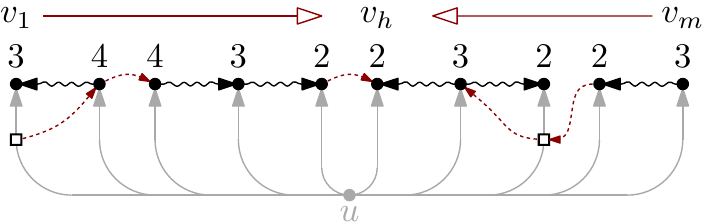}}
    \end{minipage}
    \caption{(a)~Example of a graph with $|V|-3$ forbidden configurations, each requiring one split to be resolved.
                  (b)~Example for finding the smallest set of edges to split. The numbers indicate how many splits are 
                  necessary when choosing the corresponding vertex to be $v_h$.
                  For $v_5, v_6, v_8$ and $v_9$ only two splits are necessary. 
                  Choosing $h = 6$ results in $E_{\text{split}} = \{ (u,v_1), (u,v_8)\}$. 
                  The squares indicate the result of the two splits, whereas the dotted edges represent $E'$ in Algorithm~\ref{alg:bitonic_recognition}.}\label{fig:bitonic_sharpness_1}
\end{figure}

One may wonder now if this bound can be improved. Unfortunately, the
graph shown in Fig.~\ref{fig:bitonic_sharpness_0} is an example that requires $|V| - 3$ splits, hence, the bound is tight.
It also shows that there exist graphs that can be drawn upward planar straight-line in polynomial area but do not admit a bitonic $st$-ordering.
But we will push the idea of splitting edges a bit further from a practical point of view, and
focus on the problem of finding a minimum set of edges to split. 

In the following we describe an algorithm that solves this problem in linear time.
To do so, we introduce some more notation.
Let $u \in V$ be a vertex with successor list $S(u) = \{v_1, \ldots, v_m\}$.
We define 
$L(u,h) = |\{ i < h : v_{i+1} \rightsquigarrow v_{i} \}|$ and $R(u,h) = |\{ i < h : v_{i} \rightsquigarrow v_{i+1} \}|.$
If we choose now a particular $1 \leq h \leq m$ at $u$, then we have to split
every edge $(u,v_{i+1})$ with $i < h$ for which there exists a path $v_{i+1} \rightsquigarrow v_{i}$, and 
every edge $(u,v_{i})$ with $h \leq i$ for which $G$ contains a path $v_{i} \rightsquigarrow v_{i+1}$,
that is, we have to split $L(u,h) + R(u,m) - R(u,h)$ edges. See Fig.~\ref{fig:bitonic_min_edges} for an example.
When now considering all successor lists,
the minimum number of edge splits is
\[
\sum_{u \in V} \left(R(u,m) + \min_{1 \leq h \leq m }\left\{ L(u,h) - R(u,h)\right\} \right).
\]
Notice that the locality of a split allows us to minimize the number of edge splits for every successor list independently.
From an algorithmic point of view, we are interested in the value of $h$ and not in the number of splits, hence, we may drop $R(u,m)$
and consider the problem of finding $h$ for which $L(u,h) - R(u,h)$ is minimum. 
Since this is now only a matter of counting paths for which we can again
exploit Lemma~\ref{lem:bitonic_face_paths}, a linear-time algorithm becomes straightforward (see Algorithm~\ref{alg:bitonic_min_split}).
And as a result, we may state the following lemma without proof.

\begin{algorithm}[t!]
\SetKwInOut{Input}{input}
\SetKwInOut{Output}{output}
\SetKwData{decreasing}{decreasing}
\SetKw{KwTrue}{true}
\SetKw{KwFalse}{false}
\SetKw{UpTo}{to}
\Input{Embedded planar $st$-graph $G = (V,E)$ with $S(u)$ for every $u \in V$.}
\Output{Minimum set $E_{\text{split}} \subset E$ to split for admitting a bitonic $st$-ordering.}
\Begin{
	$E_{\text{split}} \gets \emptyset$\;
	\For{$u \in V$ with $S(u) = \{v_1, \ldots, v_m\}$ }
	{
		$h \gets 1$\;
		$c_{\min} \gets c \gets 0$\;
		
		\For{$i = 2$ \UpTo $m$}{		
			$w \gets \textsc{faceSink}(u, v_{i-1}, v_{i})$\;		
			$\textbf{if }w = v_{i-1} \textbf{ then } c \gets c + 1$\;
			$\textbf{if }w = v_{i} \textbf{ then } c \gets c - 1$\;
			\If{$c < c_{\min}$}{ 
				$c_{\min} \gets c$\;
				$h \gets i$\;
			}
				
		}

		\For{$i = 1$ \UpTo $h-1$}{
			$\textbf{if }v_{i} =  \textsc{faceSink}(u, v_i, v_{i+1}) \textbf{ then } E_{\text{split}} \gets E_{\text{split}} \cup (u, v_{i}) $\;
		}		
		
		\For{$i = h$ \UpTo $m-1$}{
			$\textbf{if }v_{i+1} =  \textsc{faceSink}(u, v_i, v_{i+1}) \textbf{ then } E_{\text{split}} \gets E_{\text{split}} \cup (u, v_{i+1})  $\;
		}		
	}
	\Return $E_{\text{split}}$
}
\caption{\small Algorithm for computing the minimum set of edges to split.}\label{alg:bitonic_min_split}
\end{algorithm}

\begin{lemma}\label{lem:bitonic_min_split}
Every embedded planar $st$-graph $G = (V,E)$ can be transformed into a planar $st$-graph that admits
a bitonic $st$-ordering by splitting every edge at most once.
Moreover, the minimum number of edges to split is at most $|V| - 3$ and they can be found
in linear time.
\end{lemma}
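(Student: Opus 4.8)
The plan is to exploit the locality of edge splits established by Lemma~\ref{lem:path_after_split} to reduce the global optimization to independent per-vertex subproblems, to solve each subproblem by choosing an optimal break point $h$, and then to invoke the earlier characterization to certify that the transformed graph admits a bitonic $st$-ordering. First I would record that every edge $(u,v)$ lies in exactly one successor list, namely $S(u)$, and that by Lemma~\ref{lem:path_after_split} subdividing one edge leaves all paths among the original vertices---and hence the forbidden configurations at every \emph{other} vertex---untouched. Consequently a set of splits yields a valid transformation if and only if, at each vertex $u$ independently, the resulting successor list contains no forbidden configuration; by Proposition~\ref{pro:bitonic_h} this is equivalent to the existence of a break point $h$ with no path $w_{i+1}\pathto w_i$ for $i<h$ and no path $w_i\pathto w_{i+1}$ for $h\le i<m$, where $w_1,\ldots,w_m$ denotes the list after splitting. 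Because splits at distinct vertices act on disjoint edge sets, the minimum number of splits is the sum over $u$ of the minimum needed to install some break point at $u$.

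The core of the argument is a tight per-vertex count. I would show that subdividing $(u,v_k)$ replaces $v_k$ by a dummy $v_k'$ whose only in-neighbour is $u$, so a path into $v_k'$ can start only at $u$; hence splitting the \emph{target} of an offending path between consecutive successors destroys it, whereas splitting its source does not (the path merely reroutes through the dummy). This pins down exactly which edges must be cut to realise a given $h$: for each $i<h$ with $v_{i+1}\pathto v_i$ the edge $(u,v_i)$, and for each $h\le i<m$ with $v_i\pathto v_{i+1}$ the edge $(u,v_{i+1})$. These are distinct edges---their index sets are $\{1,\dots,h-1\}$ and $\{h+1,\dots,m\}$, so $(u,v_h)$ is never touched---their number equals $L(u,h)+R(u,m)-R(u,h)$, and cutting them is both necessary and sufficient for break point $h$.

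Minimising over $h$ and summing then gives the claimed optimum, while the fact that each edge lies in a single successor list and that within $S(u)$ the split set avoids $(u,v_h)$ yields the ``at most once'' statement. The bound $|V|-3$ follows because the per-vertex optimum is at most $\min\{L(u,m),R(u,m)\}\le (|S(u)|-1)/2$, and $\sum_{u}(|S(u)|-1)/2\le(|E|-|V|)/2\le|V|-3$ exactly as in Lemma~\ref{lem:bitonic_max_bends}. For the running time I would argue that Algorithm~\ref{alg:bitonic_min_split} computes, for each $u$, the prefix sums $L(u,i)-R(u,i)$ in a single sweep---reading each path direction off the sink of the face shared by $u$, $v_{i-1}$, $v_i$ via Lemma~\ref{lem:bitonic_face_paths}---takes a minimising $h$, and emits the forced edges in two further sweeps; with face sinks precomputed in $O(|V|)$ time every \textsc{faceSink} call is answered in constant time, and since $\sum_u|S(u)|=|E|=O(|V|)$ the whole procedure is linear.

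The step I expect to be the main obstacle is the optimality (lower bound): one must verify that \emph{any} valid split set, not merely one of the prescribed target-splitting form, already contains all the forced edges for whichever break point its output happens to realise. This rests on the asymmetry that only a target split removes a path between two consecutive successors, which has to be established simultaneously for every position---in particular one must rule out that a split at one position could cheaply dispose of an offending path at a neighbouring position---so that the necessity of the counted edges is genuinely forced and the per-vertex minimum is attained.
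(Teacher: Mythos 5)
Your proposal is correct and takes essentially the same route as the paper, which states this lemma without a formal proof and instead relies on exactly the ingredients you assemble: locality of splits via Lemma~\ref{lem:path_after_split}, the per-vertex minimisation of $L(u,h)+R(u,m)-R(u,h)$ over $h$, the $|V|-3$ bound inherited from Lemma~\ref{lem:bitonic_max_bends}, and the face-sink sweep of Algorithm~\ref{alg:bitonic_min_split}. Your explicit source/target asymmetry argument for the lower bound --- only splitting the edge to the \emph{target} of an offending path destroys it, and the edges forced by a realised break point $h$ are pairwise distinct --- supplies the optimality step the paper leaves implicit and agrees with Algorithm~\ref{alg:bitonic_min_split} (the paper's prose defining which edge to split transposes the two endpoints, which your version silently corrects).
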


\noindent Now we may use this to create upward planar poly-line drawings with few bends.

\begin{theorem}\label{thm:bitonic_min_split}
Every embedded planar $st$-graph $G = (V,E)$ admits an upward planar poly-line drawing
within quadratic area having at most one bend per edge, at most $|V|-3$ bends in total, and
such a drawing can be obtained in linear time.
\end{theorem}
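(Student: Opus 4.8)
The plan is to assemble the machinery developed in the previous sections into a single pipeline. Starting from the given embedded planar $st$-graph $G = (V,E)$, I would first invoke Lemma~\ref{lem:bitonic_min_split} to compute, in linear time, a minimum set $E_{\text{split}} \subseteq E$ of at most $|V|-3$ edges whose splitting yields a planar $st$-graph $G'=(V',E')$ that admits a bitonic $st$-ordering, where each edge of $G$ is split at most once. Recall that splitting an edge $(u,v)$ inserts a dummy vertex $v'$ and replaces $(u,v)$ by the two edges $(u,v')$ and $(v',v)$; I would record these dummies, since they are precisely the future bend points. Because at most $|V|-3$ edges are split and each at most once, the number of dummy vertices is $|V'| - |V| \le |V|-3$, and the split procedure preserves $st$-planarity as well as the embedding inherited from $G$.

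Next I would apply Theorem~\ref{thm:bitonic_st_ordering_main_1} to $G'$ to obtain a bitonic $st$-ordering $\pi'$ in linear time, and then feed $(G',\pi')$ into the drawing procedure of Theorem~\ref{thm:bitonic_upward_straightline}. This produces, again in linear time, an upward planar straight-line drawing $\Gamma'$ of $G'$ of size $(2|V'|-2)\times(|V'|-1)$.

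The final step is to reinterpret every dummy vertex $v'$ of $\Gamma'$ as a bend of the original edge $(u,v)$. Since $\Gamma'$ is drawn straight-line and no coordinate is altered, planarity is preserved; the only point needing verification is that the reconstructed curve for $(u,v)$ is strictly $y$-monotone. This follows because in the upward drawing $\Gamma'$ both segments $(u,v')$ and $(v',v)$ are strictly $y$-monotone, so $y(u) < y(v') < y(v)$ and their concatenation is $y$-monotone as well; hence the reconstructed drawing of $G$ is upward planar poly-line. Each split edge then carries exactly one bend and every other edge none, giving at most one bend per edge and at most $|E_{\text{split}}| \le |V|-3$ bends in total.

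It remains to bound the area and the running time. Since $|V'| \le |V| + (|V|-3) = 2|V|-3$, the size $(2|V'|-2)\times(|V'|-1)$ of $\Gamma'$ is $O(|V|)\times O(|V|)$, i.e.\ quadratic in $|V|$; reinterpreting dummies as bends only removes vertices from $\Gamma'$ and never enlarges its bounding box. All three stages---computing $E_{\text{split}}$, computing $\pi'$, and producing $\Gamma'$---run in linear time, and the final reinterpretation is a single pass over the dummy vertices, so the whole construction is linear. I expect the main obstacle to be not any single inequality but the verification that replacing dummies by bends simultaneously preserves planarity and the upward property; once the $y$-monotonicity of the two concatenated segments is observed, the bend counts and the quadratic area bound reduce to bookkeeping on $|V'|$.
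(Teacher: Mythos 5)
Your proposal is correct and follows essentially the same pipeline as the paper's proof: split the minimum edge set from Lemma~\ref{lem:bitonic_min_split}, compute a bitonic $st$-ordering for the resulting graph, draw it straight-line via Theorem~\ref{thm:bitonic_upward_straightline}, and replace dummies by bends. Your explicit verification of $y$-monotonicity across a bend and the bookkeeping on $|V'|\le 2|V|-3$ are just more detailed renderings of steps the paper leaves implicit.
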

\begin{proof}
We use Lemma~\ref{lem:bitonic_min_split} to obtain a new planar $st$-graph $G'=(V',E')$ with 
$|V'| \leq  2|V| -3$ and a corresponding bitonic $st$-ordering $\pi$ with Algorithm~\ref{alg:bitonic_recognition}.
With Theorem~\ref{thm:bitonic_upward_straightline},
an upward planar straight-line layout of size $(2|V'|-2) \times (|V'|-1)$ for $G'$ is computed. Replacement of the dummy vertices by bends,
yields an upward planar poly-line drawing for $G$ of size at most $(4|V|-8) \times (2|V|-4)$.
\qed\end{proof}

\noindent Recall that every upward planar graph is a spanning subgraph of a planar $st$-graph~\cite{DiBattista1988175}.
Therefore, the bound of $|V|-3$ translates to all upward planar graphs.

\begin{corollary}
Every upward planar graph $G=(V,E)$ admits an upward planar poly-line drawing
within quadratic area having at most one bend per edge and at most $|V|-3$ bends in total.
\end{corollary}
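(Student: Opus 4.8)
The plan is to reduce this corollary to Theorem~\ref{thm:bitonic_min_split} by exploiting the fact, recalled just above, that every upward planar graph is a spanning subgraph of a planar $st$-graph. First I would invoke the result of Di~Battista and Tamassia~\cite{DiBattista1988175}: given an upward planar graph $G=(V,E)$, fix an upward planar embedding and augment it to an embedded planar $st$-graph $G^\ast=(V,E^\ast)$ on the \emph{same} vertex set, with $E\subseteq E^\ast$. The crucial point here is that this augmentation adds only edges and no new vertices, so the parameter $|V|$ is identical for $G$ and for $G^\ast$; this is precisely what lets the bound $|V|-3$ transfer verbatim rather than blowing up.

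Next I would apply Theorem~\ref{thm:bitonic_min_split} to the embedded planar $st$-graph $G^\ast$. This produces an upward planar poly-line drawing $\Gamma^\ast$ of $G^\ast$ within quadratic area, with at most one bend per edge and at most $|V|-3$ bends in total. To obtain a drawing of $G$ itself, I would simply delete from $\Gamma^\ast$ the curves representing the edges in $E^\ast\setminus E$, keeping all vertex positions and all retained edge curves unchanged; call the result $\Gamma$. It then remains to check that $\Gamma$ inherits every required property. Deleting a subset of the edges from a planar drawing can never introduce a crossing, so $\Gamma$ stays planar; each retained edge is drawn by the very same curve as in $\Gamma^\ast$, hence remains strictly $y$-monotone and carries at most one bend, so $\Gamma$ is upward planar poly-line with at most one bend per edge. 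The area is bounded by that of $\Gamma^\ast$, which is quadratic in $|V|$, and the total bend count of $\Gamma$ is at most that of $\Gamma^\ast$, since removing an edge can only remove its (at most one) bend; thus $\Gamma$ has at most $|V|-3$ bends in total.

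This argument is short, and no step poses a genuine obstacle. The only point I would treat with care is the bend-count bookkeeping: one must confirm that the quantity $|V|-3$ coming from Theorem~\ref{thm:bitonic_min_split} refers to the vertex count of $G^\ast$, which coincides with that of $G$ by the spanning-subgraph property, and that the dummy vertices introduced by the split procedure of Lemma~\ref{lem:bitonic_min_split} are accounted for correctly. Since some such dummy vertices may lie on edges of $E^\ast\setminus E$, one simply observes that deleting those edges removes their bends as well, so the monotonicity of bend removal makes these cases harmless. If desired, the construction is also linear-time throughout, as both the augmentation and Theorem~\ref{thm:bitonic_min_split} run in linear time.
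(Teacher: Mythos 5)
Your proposal is correct and follows exactly the paper's route: augment $G$ to a planar $st$-graph on the same vertex set via Di~Battista and Tamassia, apply Theorem~\ref{thm:bitonic_min_split}, and delete the added edges from the resulting drawing. The paper states this in two sentences; your version merely spells out the (straightforward) verification that edge deletion preserves planarity, upwardness, area, and the bend count.
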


\section{Conclusion}\label{se:conclusions}
In this work we have introduced the bitonic $st$-ordering for planar $st$-graphs.
Although this technique has its limitations, it provides the properties of canonical orderings for the directed case.
We have shown that this concept is viable by using a classic undirected incremental drawing algorithm for creating
upward planar drawings with few bends.

\bibliographystyle{splncs03}
\bibliography{references}

\opt{arxiv}{
\newpage
\appendix
\section*{Appendix}

\section{Omitted proofs}\label{app:proofs}
\rephrase{Lemma}{\ref{lem:lembitonicdrawing}}{\lembitonicdrawing}
\begin{proof}
The first and second statement hold for every $st$-ordering with $s$ and $t$ on the outer face.
For the third statement assume to the contrary, that for some $1 < k \leq |V|$ 
the neighbors of a vertex $v$ with $\pi(v) \leq k$
that are in $G - G_k$ do not appear consecutively in the embedding around $v$. 
Then $v$ has two successors $w_a, w_c \in S(v)$
with $\pi(w_a) > k$ and $\pi(w_c) > k$. Assume that $w_a$ precedes $w_c$ in $S(v)$, that is $a < c$.
Since all vertices in $S(v)$ appear consecutively in the embedding, there exists 
then a third successor $w_b$ between $w_a$ and $w_c$ in $S(v)$ that by our assumption is in $G_{k}$, that is, $\pi(w_b) \leq k$ holds. 
Notice that $S(v)$ is of the form $S(v) = \{ \ldots, w_a, \ldots, w_b, \ldots, w_c, \ldots \}$ and 
$\pi(w_a) > \pi(w_b) < \pi(w_c)$ holds, which contradicts that $S(v)$ is bitonic with respect to $\pi$.
\qed\end{proof}

\rephrase{Lemma}{\ref{lem:bitonic_alternative}}{\lembitonic}
\begin{proof}
Recall that $A$ is bitonic increasing if and only if 
there exists $1 \leq h \leq n$ such that $a_1 < \cdots < a_h > \cdots > a_n$ holds.
We first prove~\lq\lq$\Rightarrow$\rq\rq, that is, if $A$ is bitonic increasing,
then there exists no pair $i,j$ with $1 \leq i < j < n$ and $a_i > a_{i+1} \wedge a_j < a_{j+1}$.
Assume to the contrary that there exists such a pair. Then from $a_i > a_{i+1}$, it follows
that $h \geq i$, and $a_j < a_{j+1}$ yields $j < h$, which contradicts $i < j$.
For~\lq\lq$\Leftarrow$\rq\rq~we choose, if it exists, $h = \min\{j \; | \; a_j > a_{j+1}\}$, otherwise
we set $h = n$. By our choice of $h$, $a_i < a_{i+1}$ holds for every $1 \leq i < h$. 
Moreover, for every $h \leq j < n$, it must hold that $a_j > a_{j+1}$,
because otherwise, there exists $1 \leq h < j < n$ with $a_h > a_{h+1} \wedge a_j < a_{j+1}$.
\qed\end{proof}

\rephrase{Proposition}{\ref{pro:bitonic_h}}{\probitonicpaths}
\begin{proof}
We argue the same way as in the proof of Lemma~\ref{lem:bitonic_alternative}.
If there exists no path $v_{i+1} \rightsquigarrow v_{i}$ with $1 \leq i < m$, choose $h = m$.
Then $\gforall 1 \leq i < m : v_{i+1} \npathto v_{i}$ is satisfied in a trivial way.
If there exists at least one such path, we set ${ h = \min\{ i \mid  v_{i+1} \pathto v_{i} \} }$ which
satisfies $\gforall 1 \leq i < h : v_{i+1} \npathto v_{i}$ by construction.
Now assume to the contrary that there exists a path $v_{j} \pathto v_{j+1}$ with $h \leq j < m$.
Then there exists $v_{h+1} \rightsquigarrow v_{h}$ and $h \leq j$ holds, which contradicts our assumption that
for every $1\leq i < j < m$, it holds that $ v_{i+1} \npathto v_{i} \vee v_{j} \npathto v_{j+1}$.
\qed\end{proof}

\rephrase{Lemma}{\ref{lem:path_after_split}}{\lempathsplit}
\begin{proof}
Notice that $w,x \in V$ implies $w \neq v'$ and $x \neq v'$. Every path in $G$ that contains $(u,v)$ can use $(u,v'),(v',v)$ in $G'$.
Assume there is a path $w \rightsquigarrow x$ in $G'$ that does not exist in $G$, thus, it contains $(u,v')$ or $(v',v)$.
From $w \neq v' \neq x$, it follows that the path contains both edges, $(u,v')$ and $(v',v)$, and that they appear consecutively.
Hence, $w \rightsquigarrow x$ can use the edge $(u,v)$ in $G$ instead.
\qed\end{proof}

\section{Description of the upward planar straight-line algorithm}\label{app:desc}
In the following, we describe how to adapt the canonical ordering-based planar straight-line algorithm to bitonic $st$-orderings
by borrowing some ideas from Harel and Sardas~\cite{harel98algorithm}. 
They first describe a linear-time algorithm to compute a 
biconnected canonical ordering.
Then a modification of the algorithm of de~Fraysseix et al.
is used to obtain a planar straight-line layout.
The key observation is that when installing a vertex $v_k$ that has at least two neighbors on the contour $C_{k-1}$,
one can proceed as in the original algorithm.
The only problematic case is the one in which
a vertex $v_k$ has only one neighbor on $C_{k-1}$, say $w_i$. 
Harel and Sardas~\cite{harel98algorithm} introduce  
the property of having \emph{left, right} and \emph{legal support} for these vertices.
Their solution to the problem is as follows:
If $v_k$ has left support at its only neighbor $w_i$, 
then one may use $w_{i-1}$, the predecessor of $w_i$ on $C_{k-1}$,
as a second neighbor for $v_k$ and proceed as in the original 
algorithm by pretending that the edge $(v_k, w_{i-1})$ exists. 
However, this is only possible, because the property of having left support guarantees that 
all edges that have to be attached to $w_i$ later, follow $(v_k, w_i)$ clockwise in the embedding.
Roughly speaking, all edges to be attached later appear to the right of $v_k$, so $v_k$ is 
placed to the left of $w_i$ to keep $w_i$ accessible from above.
Similarly, when $v_k$ has right support, every edge incident to $w_i$ that is not yet present 
will be attached from the left. Therefore, in case of right support, we may use $w_{i+1}$ 
as a second neighbor for $v_k$. An example for having right support is given in Fig.~\ref{fig:fpp_bitonic_support_1}.

It is not difficult to see that due to the third statement in Lemma~\ref{lem:lembitonicdrawing}, 
we can use the idea of Harel and Sardas 
to deal with the case in which a vertex has only a single predecessor.
When placing such a vertex, say $v_k$, whose only predecessor is $u$, then 
we can assume that $v_k$ is not preceded and followed in $S(u)$ by vertices with a label greater than $k$.
Therefore, the concept of having left and right support translates to bitonic $st$-orderings in the following sense: 
$v_k$ has left support (at $u$) if no vertex preceding $v_k$ in $S(u)$ exists with a label greater than $k$.
And in a symmetric manner, $v_k$ has right support, if there is no vertex following $v_k$ in $S(u)$ with a label greater than $k$.

However, one problem arises: The approach by Harel and Sardas requires a vertex with only one neighbor on $C_{k-1}$ to have legal support,
not just left or right support. A quick look at their definition reveals that there is only a difference at the boundary of the contour.
More specifically, if the only predecessor of $v_k$ is $w_1$ (or $w_m$), then $v_k$ must have right support (or left support, respectively).
This is not necessarily the case in a bitonic $st$-ordering, 
where it may happen for example that $v_k$ has right support at $w_m$.
Let us assume for a moment that we have to cope with this case in which $v_k$ has right support at $w_m$.
Hence, the edge $(v_k, w_m)$ must have a slope of $+1$, 
thus, we are forced to choose $w_l = w_m$, whereas for $w_r$ 
we are then not able to find an appropriate vertex on $C_{k-1}$.
See Fig.~\ref{fig:fpp_steps_1} for an illustration of the problem of lacking legal support.

\begin{figure}[t]
  \centering
    \begin{minipage}[b]{0.49\textwidth}
        \centering
        \subfloat[\label{fig:fpp_steps_1}{}]
        { \includegraphics[page=1]{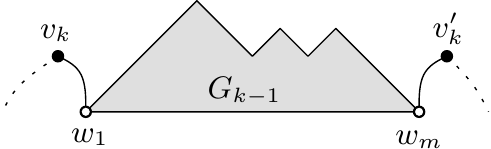}}
    \end{minipage}\hfill
    \begin{minipage}[b]{0.49\textwidth}
        \centering
        \subfloat[\label{fig:fpp_steps_2}{}]
        {\includegraphics[page=2]{fpp_steps}}
    \end{minipage}
\caption{(a) The problem of having no legal support at the boundary of the contour $C_{k-1} = \{ w_1, \ldots, w_m \}$.  The vertex to place has left support at $w_1$ or right support at $w_m$. (b) Two artificial vertices $v_L, v_R$, one at the beginning and one at the end of $C_{k-1} = \{ v_L = w_1, \ldots, w_m = v_R \}$ may serve as a second neighbor of $v_k$ in $G_{k-1}$.}
\end{figure}

To overcome this problem and without limiting the applicability of our bitonic $st$-ordering, 
we make a small modification to the algorithm.
We add two dummy vertices $v_L$ and $v_R$ that take the roles of $v_1$ and $v_2$ 
in the original algorithm with the property that $v_L$ is always 
the first, and $v_R$ always the last vertex in every contour, 
that is, for every $1 \leq k \leq n$, $C_k = \{ v_L = w_1, \ldots, w_m = v_R\}$ holds.
Notice that $v_L$ and $v_R$ are isolated vertices, 
thus, there exists no $v_k$ whose only predecessor is $v_L$ or $v_R$,
and that has left or right support.
Hence, we are always able to find a second neighbor on $C_{k-1}$ for $v_k$ as depicted in Fig.~\ref{fig:fpp_steps_2}. 

\begin{algorithm}[p]
\small
\SetKwInOut{Input}{input}
\SetKwInOut{Output}{output}
\SetKwInOut{Variables}{variables}
\SetKwData{decreasing}{decreasing}
\SetKw{KwTrue}{true}
\SetKw{KwFalse}{false}
\SetKw{UpTo}{to}
\SetKw{DownTo}{down to}
\Input{Embedded planar $st$-graph $G=(V,E)$ with successor lists $S(u)$ for every $u \in V$ and bitonic $st$-ordering $\pi$ for $G$.}
\Output{Grid-coordinates for an upward planar straight-line drawing.}
\Begin{	
	$x(v_L)  \gets  0$; $y(v_L) \gets 0$\;
	$x(v_1)  \gets  1$; $y(v_1) \gets 1$\;
	$x(v_R) \gets  2$; $y(v_R) \gets 0$\;
	$C_1 \gets  \{v_L, v_1, v_R\}$\;
	\smallskip
	\tcp{bottom-up pass}
	\For{$k = 2$ \UpTo $ n$ }
	{
		$l \gets \min\{i~|~(w_i, v_k) \in E\}$\;
		$r \gets \max\{i~|~(w_i, v_k) \in E\}$\;
		\smallskip
		\tcp{one predecessor case}
		\If{l = r}{
			$v_p \gets$ preceding vertex of $v_k$ in $S(w_r)$\;
			$\textbf{if }v_p = nil \textbf{ or } \pi(v_p) \leq k\textbf{ then }l \gets l - 1$\;
			$v_s \gets$ following vertex of $v_k$ in $S(w_r)$\;
			$\textbf{if }v_s = nil \textbf{ or } \pi(v_s) \leq k\textbf{ then }r \gets r + 1$\;
		}
		\smallskip
		\tcp{distance $w_l \leftrightarrow w_r$ after shift}
		$d \gets 2 + \sum_{i = l+1}^r x(w_i)$\;
		\smallskip
		\tcp{place $v_k$}
		$x(v_k) \gets (d + y(w_r) - y(w_l))/{2}$\;
		$y(v_k) \gets (d + y(w_r) + y(w_l))/{2}$\;
		\smallskip
		\tcp{offset $w_{l+1}, \ldots, w_{r-1} \leftrightarrow v_k$}
		$t \gets 1  - x(v_k)$\;
		\For{$i = l+1$ \UpTo  $r-1$ }{
			$\textit{parent}(w_i) \gets v_k$\;
			$t \gets t + x(w_i)$\;
			$x(w_i) \gets t$\;
		}
		\smallskip
		$x(w_r) \gets d - x(v_k)$\;
		$C_{k} \gets$ replace $w_{l+1}, \ldots, w_{r-1}$ in $C_{k-1}$ with $v_k$ 
	}
	\smallskip
	\For{$i = 2$ \UpTo $|C_n|$ }
	{
		$x(w_i) \gets x(w_i) + x(w_{i-1})$
	}
	\smallskip
	\tcp{top-down pass}
	\For{$k = n$ \DownTo $1$ }
	{
		$\textbf{if }\text{parent}(v_k) \neq nil\textbf{ then }x(v_k) = x(v_k) + x(parent(v_k))$\;
	}
}
\caption{Shifting method for bitonic $st$-orderings}\label{alg:shifting_method}
\end{algorithm}

Now we put these ideas together by describing an algorithm (see Algorithm~\ref{alg:shifting_method}).
We start by placing $v_L, v_1$ and $v_R$ at $(0,0), (1,1)$ and $(2,0)$, respectively.
In every step $2 \leq k \leq n$, we proceed exactly as in the canonical ordering based variant 
only the subroutine for determining $w_l$ and $w_r$ has to be adjusted according to the idea
of Harel and Sardas. However, notice that if $v_k$ has left and right support at $w_i$, then $w_l = w_{i-1}$ and  $w_r = w_{i+1}$ is chosen.
A complete example is shown in Fig.~\ref{fig:bitonic_algo_example},
in which the drawing for a small graph with seven vertices is created step by step. The output of the algorithm for a larger example is given in Fig.~\ref{fig:bitonic_upward_polyline_output}.

\begin{figure}[t]
\centering
    \begin{minipage}[b]{0.2\textwidth}
        \centering
        \subfloat[\label{fig:fpp_bitonic_example_1}{}]
        { \includegraphics[page=1, scale = 0.95]{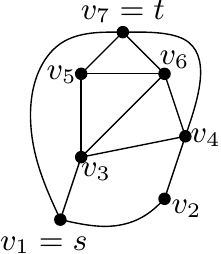}
        }
    \end{minipage}\hfill
    \begin{minipage}[b]{0.18\textwidth}
        \centering
        \subfloat[\label{fig:fpp_bitonic_example_2}{}]
        { \includegraphics[page=2]{fpp_bitonic_example}}
    \end{minipage}\hfill
    \begin{minipage}[b]{0.3\textwidth}
        \centering
        \subfloat[\label{fig:fpp_bitonic_example_3}{}]
 	{ \includegraphics[page=3]{fpp_bitonic_example}}
    \end{minipage}\hfill
    \begin{minipage}[b]{0.31\textwidth}
        \centering
        \subfloat[\label{fig:fpp_bitonic_example_4}{}]
 	{ \includegraphics[page=4]{fpp_bitonic_example}}
    \end{minipage} 
    \begin{minipage}[b]{0.45\textwidth}
        \centering
        \subfloat[\label{fig:fpp_bitonic_example_5}{}]
 	{ \includegraphics[page=5]{fpp_bitonic_example}}
    \end{minipage}\hfill
    \begin{minipage}[b]{0.5\textwidth}
        \centering
        \subfloat[\label{fig:fpp_bitonic_example_6}{}]
 	{ \includegraphics[page=6]{fpp_bitonic_example}}
    \end{minipage}\\
     \begin{minipage}[b]{0.468\textwidth}
        \centering
        \subfloat[\label{fig:fpp_bitonic_example_7}{}]
 	{ \includegraphics[scale=0.9, page=7]{fpp_bitonic_example}}
    \end{minipage}\hfill
    \begin{minipage}[b]{0.53\textwidth}
        \centering
        \subfloat[\label{fig:fpp_bitonic_example_8}{}]
 	{ \includegraphics[scale=0.9,page=8]{fpp_bitonic_example}}
    \end{minipage} 
    \caption{(a) Example graph consisting of seven vertices with a bitonic $st$-ordering.
    		  (b)-(h) Steps during the construction of the drawing. (b) $v_2$ is supported by $v_R$ and serves
		  in the next step (c) as supporting vertex for $v_3$. (f) $v_5$ uses $v_1$ as support.}\label{fig:bitonic_algo_example}
\end{figure}

\begin{figure}[p]
\centering
\includegraphics[angle=90,width=0.75\textwidth]{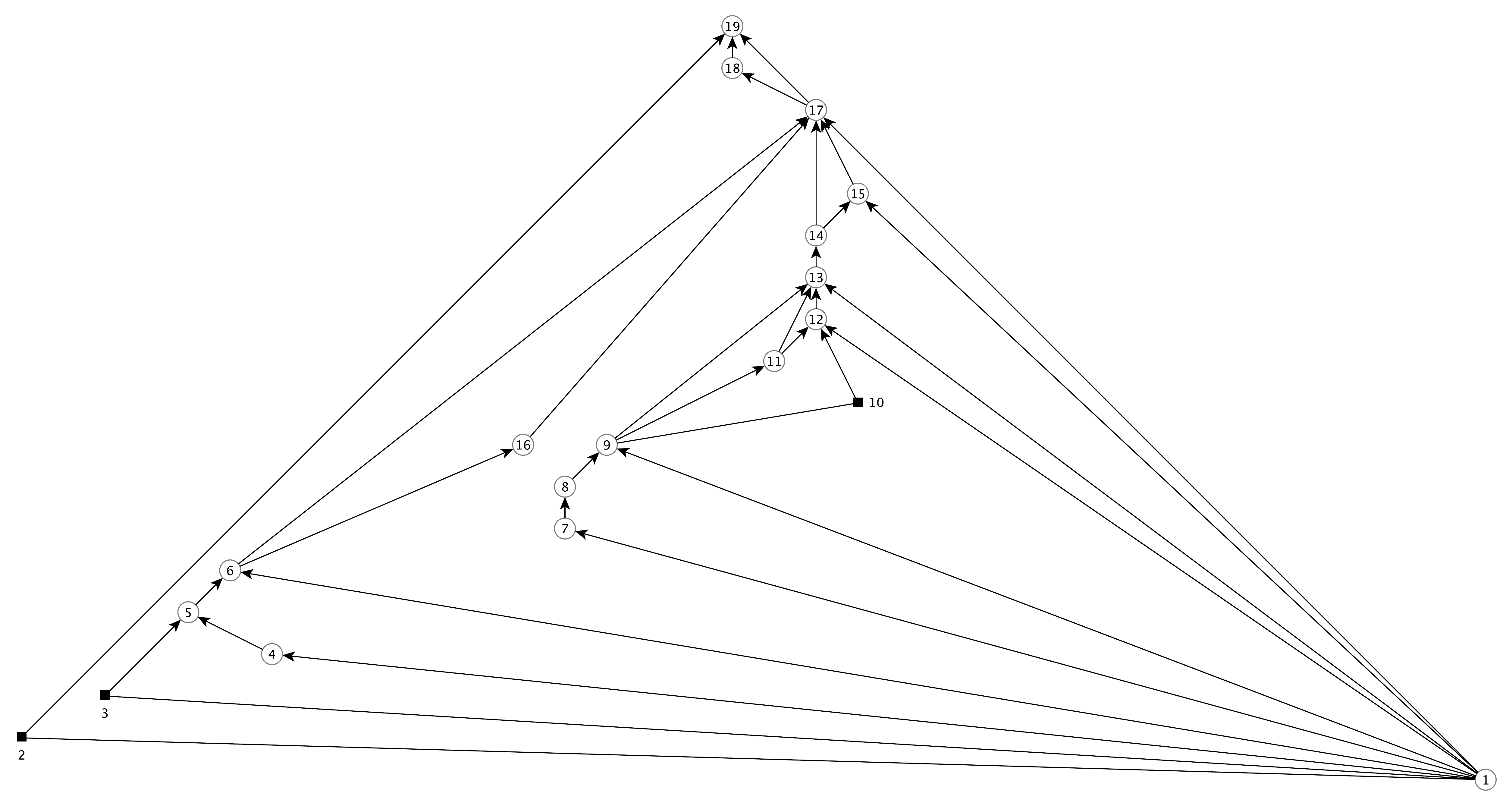}
\caption{Example of an upward planar poly-line drawing of a planar $st$-graph $G = (V,E)$ with $|V| = 16$ and $|E| = 30$.
Circles represent vertices of $G$, whereas squares indicate bends. The labels correspond to the rank in the bitonic $st$-ordering.}\label{fig:bitonic_upward_polyline_output}
\end{figure}

}
\end{document}